\newcommand\reallywidehat[1]{%
\savestack{\tmpbox}{\stretchto{%
  \scaleto{%
    \scalerel*[\widthof{\ensuremath{#1}}]{\kern.1pt\mathchar"0365\kern.1pt}%
    {\rule{0ex}{\textheight}}
  }{\textheight}%
}{2.4ex}}%
\stackon[-6.9pt]{#1}{\tmpbox}%
}
\title{Explorations in Subexponential non-associative non-commutative Linear Logic (extended version)}
\author{Eben Blaisdell\institute{University of Pennsylvania, USA}\email{ebenb@sas.upenn.edu}
    \and
    Max Kanovich\institute{University College London, UK}\email{m.kanovich@ucl.ac.uk}
    \and 
    Stepan L. Kuznetsov\thanks{The work of Kuznetsov was supported within the framework of HSE University Basic Research Program and by the the Theoretical Physics and Mathematics Advancement Foundation ``BASIS.''}\institute{Steklov Mathematical Institute of RAS, Russia \\ HSE University, Russia}\email{stephan.kuznetsov@gmail.com}
    \and
    Elaine Pimentel\thanks{Pimentel has received funding from the European Union's Horizon 2020 research and innovation programme under the Marie Sk\l odowska-Curie grant agreement Number 101007627.}
    \institute{University College London, UK}
    \email{e.pimentel@ucl.ac.uk}
    \and Andre Scedrov\institute{University of Pennsylvania, USA}
    \email{scedrov@math.upenn.edu}
    }
\begin{document}
\maketitle

\begin{abstract} 
In~\cite{DBLP:conf/cade/BlaisdellKKPS22} we introduced a non-associative non-commutative linear logic extended by multimodalities, called subexponentials, licensing local application of structural rules. Here, we further explore this system, considering its classical one-sided multi-succedent classical version, following the exponential-free calculi of~\cite{DBLP:conf/lacl/Buszkowski16} and~\cite{DBLP:journals/sLogica/GrooteL02}, where  the intuitionistic calculus is shown to embed faithfully into the classical fragment. 
\end{abstract}

\section{Introduction}

Following the work of Ajdukiewicz~\cite{Ajdukiewicz} and Bar-Hillel~\cite{Hillel} on categorial grammars, Lambek introduced two versions of non-commutative logical calculi intended to capture grammaticality in natural languages: the associative version~\cite{Lambek1958} and the non-associative version~\cite{Lambek1961OnTC}. In~\cite{Abrusci90} Abrusci  showed  that the associative Lambek calculus corresponds to a non-commutative version of multiplicative intuitionistic linear logic~\cite{Girard:1987uq}.  

Classical linear logic~\cite{Girard:1987uq} is a resource conscious logic, in the sense that
formulae are consumed when used during proofs, unless marked with the modalities $\quest$ and $\bang$ (called  {\em exponentials}).  Formulae marked with such exponentials behave \emph{classically}, so that classical and intuitionistic logics' behaviours can be captured in linear logic.
As it turns out, exponentials are not canonical, in the sense that even having the same scheme for introduction rules, marking the exponentials with different labels (\eg\ $\nbang{i},\nquest{i}$ for $i$ in a set of {\em labels}) does not preserve equivalence, that is, 
$\nbang{i}F\not\equiv\nbang{j}F$ and $\nquest{i}F\not\equiv\nquest{j}F$ if $i\not=j$. This allows for the introduction of a (possibly infinite) set of connectives, called {\em subexponentials}.

Extensions of linear logic/Lambek calculi with subexponentials are considered in~\cite{DBLP:conf/cade/KanovichKNS18,DBLP:journals/mscs/KanovichKNS19,DBLP:conf/cade/BlaisdellKKPS22}. In~\cite{DBLP:conf/cade/KanovichKNS18} a cut-admissible framework for subexponentials in a non-commutative {\em intuitionistic} linear logic was introduced. A  {\em classical} version of this system was presented in~\cite{DBLP:journals/mscs/KanovichKNS19}, showing that, via an appropriate embedding, one can view the `classical' system as a conservative extension of the `intuitionistic' system.
 
In~\cite{DBLP:conf/cade/BlaisdellKKPS22} we extended the work in~\cite{DBLP:conf/cade/KanovichKNS18} by proposing $\acLL$, a non-associative analogue of the previous system. In the present work, we introduce the cut-admissible calculus $\CacLL$ for classical non-associative non-commutative multi-modal linear logic
and show that this is conservative over $\acLL$. 
%

It should be noted that such conservativity results are quite unusual, as they do 
not hold for richer logics which enjoy more structural rules for arbitrary formulae. For example, while classical logic can be adequately represented in the intuitionistic logic via \eg\ a double-negation translation, it is easy to see that the other direction has no truth-preserving propositional encodings. In fact, if there were a faithful translation from intuitionistic logic into classical logic, there would exist a finite matricial decision procedure for propositional intuitionistic logic.
The crucial difference between classical logic and substructural ones is that
derivability, in a classical system, of intuitionistically invalid principles requires, besides {\em tertium non datur}, also structural rules. Such principles include, \eg, Peirce's law or Grishin axiom ($A \to (B \vee C) \Rightarrow (A \to B) \vee C$). 

In the substructural setting, the discussion on the conservativity of `classical' systems over `intuitionistic' systems dates back to Schellinx's observation of the analogous result for linear logic with an appropriate choice of connectives~\cite{DBLP:journals/logcom/Schellinx91} (see also~\cite{DBLP:conf/lics/Laurent18}). More specifically, Schellinx proved that fragments of classical linear logic in the language of intuitionistic linear logic are conservative if and only if they do not include the constant $\zero$, or do not include the linear implication. 

This asymmetry is broken if (full powered) subexponentials are added to the system: in~\cite{DBLP:conf/csl/Chaudhuri10} Chaudhuri showed that the conservativity result holds for linear logic with subexponentials. 

In systems not considering the additive constant $\zero$, conservativity also holds in the absence of other substructural features. For example, in associative non-commutative setting (without subexponentials) an embedding of a `classical' substructural system over an `intuitionistic' one was discussed in~\cite{Pentus1998}. The same holds if subexponentials are added to the scene (see~\cite{DBLP:journals/mscs/KanovichKNS19}). Finally, in the non-associative non-commutative setting, De Groote and Lamarche showed the analogous  
result with exactly the binary multiplicative connectives~\cite{DBLP:journals/sLogica/GrooteL02}.

The present work combines all the aforementioned results, proving that conservativity of `classical'  over the `intuitionistic' holds in the substructural non-associative non-commutative multimodal framework, when the additive constant $\zero$ is not present.
This is a relevant outcome, since conservativity results allow transferring linguistic applications of the intutionistic system to the classical one: both positive (derivability for correct sentences) and negative (non-derivability for incorrect ones) information is preserved. Moreover, if the types for words in a categorial grammar are formulae in the intuitionistic system, then the intuitionistic system is sufficient for all derivations we might need. Therefore, possible {\em new} applications of the classical system in linguistics (that is, applications for which the intuitionistic system is insufficient) would necessarily require using of essentially `classical' types, \ie, formulae which are not translations of intuitionistic ones. We leave the search for such possible applications for further research.

The motivation of the present paper is in the line of De Groote and Lamarche~\cite{DBLP:journals/sLogica/GrooteL02}. As noticed above, in the pure substructural setting the classical system is richer than the intuitonistic one. Symmetries latent in the intuitionistic presentation are now fully observed. Thus, considering a classical counterpart of the intuitionistic system $\acLL$ becomes a theoretical requirement.

The calculi $\acLL$ and $\CacLL$, being non-associative systems, require quite sophisticated structure in sequents. Namely, sequents involve not just sets, multisets, or sequences, but tree-like structures of formulae. This shows some connection to Display Logic~\cite{DBLP:journals/jphil/Belnap82a,DBLP:conf/csl/CloustonDGT13}, which also describe various substructural logics via complex structures of sequents and `display postulates' for different variations of associativity and commutativity. We plan to investigate these connections in future work. 

The rest of the paper is organized as follows. Section~\ref{sec:CaLL} brings a short introduction to linear logic and subexponentials, presents the system $\CacLL$ for classical non-associative non-commutative multi-modal linear logic together with a proof of cut-admissibility of the system. Section~\ref{sec:emb} presents the embedding of $\acLL$ into 
$\CacLL$, showing that one can view the `classical' system as a conservative extension of the `intuitionistic' system. It is also shown that, as in~\cite{DBLP:journals/logcom/Schellinx91,DBLP:journals/mscs/KanovichKNS19}, adding the zero constant is enough for destroying the conservativity. Section~\ref{sec:conc} concludes the paper by pointing some future directions, including a discussion about a focused system. Indeed, since the structural rules are circular, $\CacLL$ is not adequate for proof search. But they can be ``tamed" by eliminating the application of structural rules over structures, and 
restricting the application of the structural subexponential rules to {\em neutral formulae}, in the same way as done in~\cite{DBLP:conf/fossacs/GheorghiuM21} for contraction.

\section{The Classical System}\label{sec:CaLL}
Classical linear logic ($\LL$~\cite{Girard:1987uq}) is a resource conscious logic, in the sense that
formulae are consumed when used during proofs,
unless 
marked with the exponential $\quest$ (whose dual is $\bang$).  Formulae marked with $\quest$ behave \emph{classically}, \ie, they can be contracted 
and weakened 
during proofs.   Propositional $\LL$ connectives include the
additive  conjunction $\with$ and disjunction $\oplus$ and their multiplicative 
versions $\otimes$ and $\lpar$, together with their units: 

\par\nobreak\bgroup%
\begin{tikzpicture}[node distance=1ex]
  \node [matrix of math nodes] (gr) {
    F, G, \dotsc & ::= &
    \node(a){A}; & \mid & \node(tens){F \otimes G}; & \mid & \mathsf{1} & \mid &
    \node(plus){F \oplus G}; & \mid & \mathsf{0} &
    \mid & \node(bang){\mathsf{!} F}; \\
    & \node[right] {\mid} ; &
    \node(a'){ A^\perp}; & \mid &
    \node(par){F \bindnasrepma G}; & \mid & \node(bot){\bot}; & \mid &
    \node(with){F \binampersand G}; & \mid & \node(top){\top}; &
     \mid & \node(qm){\mathsf{?} F}; \\
  } ;
  \node at ($(bang.north east)!.5!(qm.south east)+(1.8,0)$) {
    \refstepcounter{equation}
    \label{eq:gram}
  } ;
  \begin{scope}[on background layer]
    \fill[rounded corners,color=green!5!white]
       ($(a.north west)-(.2,0)$) rectangle ($(a'.south east)-(0,.5)$) ;
    \node at ($(a'.south west)!.5!(a'.south east)-(0.05,.2)$){
      \tiny\scshape literals
    } ;
    \fill[rounded corners,color=blue!5!white]
       (tens.north west) rectangle ($(bot.south east)-(0,.5)$) ;
    \node at ($(par.south west)!.5!(bot.south east)-(0,.2)$) {
      \tiny\scshape multiplicatives
    } ;
    \fill [rounded corners,color=red!5!white]
       (plus.north west) rectangle ($(top.south east)-(0,.5)$) ;
    \node at ($(with.south west)!.5!(top.south east)-(0,.2)$) {
      \tiny\scshape additives
    } ;
    \fill [rounded corners,color=cyan!10!white]
       (bang.north west) rectangle ($(qm.south east)+(.2,-.5)$) ;
    \node[align=flush left] at ($(qm.south west)!.5!(qm.south east)+(.15,-.2)$) {
      \tiny\scshape exp.
    } ;
  \end{scope}
\end{tikzpicture}
\egroup\par\nobreak\noindent
\setcounter{equation}{0}
Note that $(\cdot)^\perp$ (negation) has atomic scope. For an arbitrary formula $F$, $F^\perp$ denotes the result of moving negation inward until it has atomic scope. 
We shall refer to atomic ($A$) and negated atomic  ($A^\bot$) formulae as {\em literals}. 
The connectives in the first line denote the de Morgan dual of the connectives in the second line. Hence,   for atoms $A,B$, the expression $(\bot \with (A \otimes (! B)))^\perp$ denotes
$\one \oplus (A^\perp \lpar (\quest B^\perp))$. 

As is usual for non-associative systems, we consider binary trees of formulae.  To `classicalize', we choose a one-sided sequent and a singular involutive `tight' negation.

\begin{definition}[Structured sequents]
{\em Structures}  include the empty structure $\varnothing$, formulae, or pairs containing structures:
\[
\begin{array}{lcl}
\Gamma & ::= & \varnothing \mid {} F \mid (\Gamma,\Gamma).
\end{array}
\]

Structures are considered up to the following equivalences, which wipe out empty substructures inside a bigger structure: $(\varnothing,\Gamma)$ and $(\Gamma,\varnothing)$ are the same as $\Gamma$.
Thus, any non-empty structure may be regarded as a rooted binary tree whose leaves are labelled with formulae.

%

A \emph{context with several holes}, $\Rx{}\ldots\Ex{}$ is obtained from a structure by replacing designated occurrences of formulae with empty placeholders. Given a context with holes, we write $\Rx{\Delta_1}\ldots\Ex{\Delta_n}$ for the structure which is obtained from $\Rx{}\ldots\Ex{}$ by replacing the placeholders with structures $\Delta_1$, \ldots, $\Delta_n$ (in the given order).

Some of the $\Delta_i$ are allowed to be empty. In this case, the corresponding placeholder is just removed: $(\varnothing,\Gamma)$ and $(\Gamma,\varnothing)$ are replaced by $\Gamma$, and this operation is performed recursively.

A {\em structured sequent} (or simply {\em sequent}) has the form $\seq\Gamma$ where $\Gamma$ is a non-empty structure.
\end{definition}

The rules for the structured system for classical non-associative non-commutative linear logic are depicted in Figure~\ref{fig:FNL}. This is an extension of the system presented in~\cite{DBLP:journals/sLogica/GrooteL02} with the additive connectives.

\begin{figure}[t]
{\sc Propositional rules}
\[
\infer[\tensor]{\seq((\Gamma,\Delta),F\tensor G)}{\seq \Gamma,G &
\seq \Delta,F}
\qquad
\infer[\parr]{\seq\Rx{F\parr G}}{\Rx{(F,G)}}
\]
\[
\infer[\oplus_i]{\seq\Rx{F_1\oplus F_2}}{\seq\Rx{F_i}}
\qquad
\infer[\with]{\seq\Rx{F_1\with F_2}}{\seq\Rx{F_1} &
\seq\Rx{F_2}}
\]
\[
\infer[\perp]{\seq\Rx{\perp}}{\seq\Rx{}}
\qquad
\infer[\one]{ \seq \one}{}
\qquad
\infer[\top]{ \seq \Rx{\top}}{}
\]
{\sc Structural Rules}
\[
\infer[\E]{\seq (\Gamma,\Delta)}{\seq (\Delta,\Gamma)}
\qquad
\infer[\A1]{\seq ((\Gamma,\Delta),\Pi)}{\seq (\Gamma,(\Delta,\Pi))}
\qquad
\infer[\A2]{\seq (\Gamma,(\Delta,\Pi))}{\seq ((\Gamma,\Delta),\Pi)}
\]
{\sc Initial and cut rules}
\[
\infer[\init]{\seq (A,A^{\perp})}{}
\qquad
\infer[\cut]{\seq (\Gamma,\Delta)}{\seq (\Gamma,A) & \seq (A^{\perp},\Delta)}
\]
\caption{Structured system for classical non-associative non-commutative linear logic ($\CNL$).}\label{fig:FNL}
\end{figure}

The structural rules need some clarification.
At the first glance, they look like the rules of commutativity (exchange) and associativity, which could have ruined the whole idea of building a non-associative non-commutative logic. It is important to notice, however, that these rules allow exchange and associativity only {\em on the top level:} e.g., one cannot obtain $\seq ((\Gamma, \Delta), \Psi)$ from $\seq ((\Delta, \Gamma), \Psi)$. This means that the structural rules are a non-associative analogue of cyclic shifts (as in cyclic linear logic~\cite{Yetter}). The level of structural flexibility provided by these rules is discussed below in Section~\ref{sect:equivalence}.

Similar to modal connectives, the exponentials $\bang,\quest$ in $\LL$ are not {\em canonical}~\cite{danos93kgc}, in the sense that if  $i\not= j$ then
$\nbang{i}F\not\equiv\nbang{j}F$ and $\nquest{i}F\not\equiv\nquest{j}F$.
Intuitively, this means that we can mark the exponential with {\em labels} taken from a set $I$ organized in a pre-order $\preceq$ (\ie, reflexive and transitive), obtaining (possibly infinitely-many) exponentials ($\nbang{i},\nquest{i}$ for $i\in I$).
Also as in multi-modal systems, the pre-order  determines the provability relation: 
for a general formula $F$, $\nbang{b}F$ {\em implies} $\nbang{a}F$ iff $a \preceq b$.

Originally~\cite{nigam10jar}, subexponentials could assume only weakening and contraction axioms:
\[
\C:\;\; \nbang{i} F \limp \nbang{i} F
     \otimes \nbang{i} F \qquad \W:\;\; \nbang{i} F \limp \one  
\]
In~\cite{DBLP:conf/cade/KanovichKNS18,DBLP:journals/mscs/KanovichKNS19}, non-commutative systems allowing commutative subexponentials were presented:
\[ \mathsf{E}:\;\; (\nbang{i} F)\otimes G \equiv G\otimes(\nbang{i} F) 
\]
 In~\cite{DBLP:conf/cade/BlaisdellKKPS22}, we went one step further and presented a non-commutative, non-associative linear logic based system with the possibility of assuming associativity
%
\[ \mathsf{A1}:\;\; \nbang{i} F\otimes(G\otimes H) \equiv (\nbang{i} F\otimes G)\otimes H
     \qquad \mathsf{A2}:\;\; (G\otimes H)\otimes \nbang{i} F \equiv G\otimes (H\otimes\nbang{i} F) 
  \]
as well as commutativity and other structural properties. In this paper, we present the classical version of this system.
 
We start by presenting an adaption of simply dependent multimodal linear logics ($\mathsf{SDML}$) appearing in~\cite{DBLP:conf/lpar/LellmannOP17} to the non-associative/commutative case.
The language of non-commutative $\mathsf{SDML}$ is that of (propositional) linear logic
with subexponentials~\cite{DBLP:journals/mscs/KanovichKNS19}.  

\begin{definition}[SDML]\label{def:sdmls}
Let $\mathcal{A}$ be a set of axioms. A (non-associative non-commutative) {\em simply dependent multimodal logical system} ($\mathsf{SDML}$) is given by a triple $\Sigma=(I,\cless,f)$, where $I$ is a set of indices, $(I,\cless)$ is a pre-order, and $f$ is a mapping from $I$ to $2^{\mathcal{A}}$. 
  
If $\Sigma$ is a $\mathsf{SDML}$, then the \emph{logic described by} $\Sigma$ has the modalities $\nbang{i},\nquest{i}$ for every $i \in I$, with the rules of non-associative non-commutative linear logic, together with rules for the axioms $f(i)$ and the interaction axioms $\nbang{j} A\limp \nbang{i} A$ for every $i,j \in I$ with $i \cless j$.

Finally, every $\SDML$ is assumed to be upwardly closed w.r.t. $\preceq$, that is, if $i\preceq j$ then $f(i)\subseteq f(j)$ for all $i,j\in I$.\footnote{This requirement is needed for proving cut-admissibility of the correspondent sequent systems (see~\cite{danos93kgc}).}
\end{definition}

The structured system $\CacLL$ is determined by the system $\CNL$ and the rules in Figure~\ref{fig:CacLL}. $\CacLL$ is the logic described by the $\SDML$ determined by $\Sigma$, with $\mathcal{A}=\{\C,\W,\A1,\A2,\E\}$ where, in the subexponential rule for $\mathsf{S}\in\mathcal{A}$, the respective $s\in I$ is such that $\mathsf{S}\in f(s)$ (\eg\ the subexponential symbol $e$ indicates that $\E\in f(e)$). We will denote by $\nynot{\mathsf{Ax}}\Delta$ the fact that the structure $\Delta$ contains only formulae with top-level  as leaves, each of them assuming the axiom $\mathsf{Ax}$.

As an economic notation, we will write $\upset{i}$ for the \emph{upset} of
the index $i$, \ie, the set $\{j \in I : i \cless j\}$.  We extend this notation to structures in the following way. Let $\Gamma$ be a structure containing only question-marked formulae as leaves.
 If such formulae admit the multiset partition 
 \[
 \{ \nquest{j}F\in\Gamma: i \cless j\}\cup \{ \nquest{k}F \in\Gamma: i \not\cless k\mbox{ and }\W\in f(k)\}
 \]
then $\Gamma^{\upset{i}}$ is the structure obtained from $\Gamma$ by erasing the formulae in the second component of the partition (equivalently, the substructure of $\Gamma$ formed with all and only formulae of the first component of the partition). 
 Otherwise, $\Gamma^{\upset{i}}$ is undefined.

\begin{example}\label{ex:preceq}
Let $\Gamma=(\nquest{i}A,(\nquest{j}B,\nquest{k}C))$ be represented below left, $i\preceq j$ but $i\not\preceq k$, and $\W\in f(k)$. Then $\Gamma^{\upset{i}}=(\nquest{i}A,\nquest{j}B)$ is depicted below right
\begin{center}
\begin{tikzpicture}[level distance=2em, 
			level 1/.style={sibling distance=5em},
			level 2/.style={sibling distance=3em},
			every node/.style = {align=center}]]
			\node {,}
			child[thick] { node {$\nquest{i}A$}}
			child[thick] { node {,}
				child[thick] { node {$\nquest{j}B$}}
				child[thick] {node {$\nquest{k}C$}}};
		\end{tikzpicture}
 \qquad
\begin{tikzpicture}[level distance=2em, 
			level 1/.style={sibling distance=5em},
			level 2/.style={sibling distance=3em},
			every node/.style = {align=center}]]
			\node {,}
			child[thick]{node {$\nquest{i}A$}}
			child[thick] { node {$\nquest{j}B$}};
		\end{tikzpicture}
\end{center}
Observe that, if $\W\notin f(k)$, then $\Gamma^{\upset{i}}$ cannot be built. In this case, any derivation of  $\seq(\Gamma,\nbang{i}C)$ cannot start with an application of the promotion rule, similarly to how promotion in $\LL$ cannot be applied in the presence of non-classical contexts. 
\end{example}

\begin{figure}[t]
{\sc Subexponential rules}
\[\infer[\prom]{\seq (\Gamma,\nbang{i}F)}{\seq(\Gamma^{\upset{i}},F)}
\qquad
\infer[\der]{\seq\Rx{\nynot{i}F}}{\seq \Rx{F}}
\]
{\sc Subexponential Structural rules}
\[
\infer[\nynot{}\mathsf{A}1]{\seq((\Delta_1,(\Delta_2,\Delta_3)),\nynot{a1}\Gamma)}{
    \seq(((\Delta_1,\Delta_2),\Delta_3),\nynot{a1}\Gamma)
}
\qquad
\infer[\nynot{}\mathsf{A}2]{\seq(((\Delta_1,\Delta_2),\Delta_3),\nynot{a2}\Gamma)}{
    \seq((\Delta_1,(\Delta_2,\Delta_3)),\nynot{a2}\Gamma)
}
\qquad
\infer[\nynot{}\E]{\seq((\Delta_1,\Delta_2),\nynot{e}\Gamma)}{
\seq((\Delta_2,\Delta_1),\nynot{e}\Gamma)
}
\]
\[
\infer[\nynot{}\W]{\seq\Rx{\nynot{w}\Delta}}{\seq\Rx{}}
\qquad
\infer[\nynot{}\C]{\seq \Rx{}\ldots\Ex{\nquest{c}\Delta}\ldots\Ex{}}{\seq \Rx{\nquest{c}\Delta}\ldots\Ex{\nquest{c}\Delta}}
\]
\caption{Structured system $\CacLL$ for the logic described by $\Sigma$.}\label{fig:CacLL}
\end{figure}

Notice that $\Gamma$, in general, cannot be uniquely computed from $\Gamma^{\upset{i}}$. Indeed, when going from $\Gamma^{\upset{i}}$ to $\Gamma$, one may non-deterministically add formulae of the form ${?}^k C$, where $\W \in f(k)$. Thus, an application of the $\prom$ rule may implicitly contain several applications of the weakening rule $\nynot{}\W$.

\subsection{Structural Equivalence}\label{sect:equivalence}
\newcommand{\desi}[1]{\reallywidehat{#1}}

The structural rules of $\CacLL$ make structures very flexible. 
This flexibility may be pinned by choosing a more sophisticated structures for sequents than rooted binary trees. This corresponds to the move from linearly to circularly ordered sequences of formulae, which is used in the associative case~\cite{Yetter}. In the non-associative situation, this transformation is a bit trickier. Namely, following~\cite{DBLP:journals/sLogica/GrooteL02}, we may represent (non-empty) structures as acyclic graphs (unrooted trees) where each vertex has degree~1 or~3. Vertices of degree~1 are leaves, and they are labelled by individual formulae. Vertices of degree~3 are inner nodes. For each inner node, the cyclic order of its neighbors is maintained.
%
Such graphs with cyclic order on inner nodes are called {\em unrooted cyclically-ordered-neigbor 3-regular trees with leaves}~\cite{DBLP:journals/sLogica/GrooteL02}.


There are many structures equivalent to any given structure, but if we designate a subtree to appear in a particular position then there is a unique representation. This can intuitively described as a keychain with many layers: selecting one for opening a door changes the arrangement of the keys, but not their position in the keychain. 
In this paper, we choose to present all proofs in terms of sequents, and we choose the first right branch of the structure as this designated position.
This is equivalent to flipping the structure around to put a formula into spot, in the same way we would select a key in a keychain.

This works due to the following definitions and technical lemmas (the proofs are in Appendix~\ref{app:sec2}).

\begin{definition}
We define $\sim$, called \emph{structural equivalence}, between two structures to be the reflexive, symmetric, transitive closure of

\begin{align*}
    (\Gamma,\Delta) &\sim (\Delta,\Gamma) \\
    (\Gamma,(\Delta,\Pi)) &\sim ((\Gamma,\Delta),\Pi)
\end{align*}
That is, $\Theta\sim \Xi$ if and only if $\Xi$ is achievable from $\Theta$ using only the structural rules $(\E),(\A1),$ and $(\A2)$.
%
%
\end{definition}

%
%

Note that for the tree representation in~\cite{DBLP:journals/sLogica/GrooteL02}, this corresponds to choosing a particular edge in the graph.
We formalize this by defining the following.

\begin{definition}
For a context with a hole $\Rx{}$, we define the \emph{designated} structure $\desi{\Rx{*}}$ inductively by the following:

\begin{align*}
\desi{(\Rx{*},\Delta)} &:\equiv \desi{(\Delta,\Rx{*})} \\
\desi{(\Gamma,(\Delta,\Px{*}))} &:\equiv \desi{((\Gamma,\Delta),\Px{*})} \\
\desi{(\Gamma,(\Dx{*},\Pi))} &:\equiv \desi{((\Pi,\Gamma),\Dx{*})} \\
\desi{(\Gamma,*)} &:\equiv \Gamma \\
\end{align*}
To ensure that this definition is complete, we also specify the empty case $\desi{*}:\equiv\cdot$.
We call the overline in the notation $\desi{\Rx{*}}$ the \em{designator}.
\end{definition}
Observe that the designator is well defined.
Indeed, first note that the left hand sides are cumulatively exhaustive.  If $*$ is on the left, it is handled by the first case.  If it is on the right, it is either immediately to the right or it is on the right side's left or right branch.  These are handled by the fourth, second, and third cases respectively.
Note also that the left hand sides are mutually exclusive.
Finally, note that recursive application of this definition terminates, because each case places $*$ on the right branch and the depth of $*$ on the right decreases in every subsequent case.

The following lemma shows that this definition indeed gives us an equivalent structure that puts the designated subtree on the right.
\begin{lemma}[Correctness of Designator]
For any structure $\Tx{\Xi}$ with distinguished subtree, we have
\[
(\desi{\Tx{*}},\Xi)\sim\Tx{\Xi}.
\]
\end{lemma}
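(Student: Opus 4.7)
The plan is to prove the claim by well-founded induction on the structure $\Tx{*}$, using exactly the termination ordering indicated in the remark just before the lemma: each defining clause of $\desi{\cdot}$ either moves $*$ from the left subtree to the right (case~1) or strictly decreases the rightward depth of $*$ (cases~2 and~3), while cases~4 and the empty case are base cases. So the induction is on this well-founded measure, and the induction step consists of one case per clause in the definition of $\desi{\cdot}$, each time applying the inductive hypothesis to the right-hand side of that clause and then verifying the residual equivalence using only the generators of $\sim$.

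For the base cases: when $\Tx{*}= *$, we have $\desi{*}\equiv\cdot$ and $\Tx{\Xi}=\Xi$, so $(\desi{*},\Xi)\equiv(\cdot,\Xi)\equiv\Xi$ by the equivalence that wipes out empty substructures. When $\Tx{*}=(\Gamma,*)$, the definition gives $\desi{(\Gamma,*)}\equiv\Gamma$, and $(\Gamma,\Xi)$ is literally $\Tx{\Xi}$. For the inductive steps: in case~1, the IH yields $(\desi{(\Delta,\Sx{*})},\Xi)\sim(\Delta,\Sx{\Xi})$, and then a single application of $\E$ gives $(\Delta,\Sx{\Xi})\sim(\Sx{\Xi},\Delta)=\Tx{\Xi}$. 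In case~2, the IH yields $(\desi{((\Gamma,\Delta),\Px{*})},\Xi)\sim((\Gamma,\Delta),\Px{\Xi})$, and one application of $\A2$ gives $((\Gamma,\Delta),\Px{\Xi})\sim(\Gamma,(\Delta,\Px{\Xi}))=\Tx{\Xi}$.

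Case~3 will be the most involved step, but it is still just a short chain of rewrites. The IH yields $(\desi{((\Pi,\Gamma),\Dx{*})},\Xi)\sim((\Pi,\Gamma),\Dx{\Xi})$, and one has to check
\[
((\Pi,\Gamma),\Dx{\Xi}) \;\sim\; (\Dx{\Xi},(\Pi,\Gamma)) \;\sim\; ((\Dx{\Xi},\Pi),\Gamma) \;\sim\; (\Gamma,(\Dx{\Xi},\Pi)) \;=\; \Tx{\Xi},
\]
using $\E$, $\A2$, and $\E$ in turn. This is the single place where the two generators of $\sim$ must be combined nontrivially, and it is the only step I would expect to need real care; everything else is a single-rule rewrite.

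The main obstacle is a purely bookkeeping one: making sure the induction really is well-founded. Concretely, I would state the measure explicitly (for instance the pair (indicator that $*$ is not on the right branch, depth of $*$ along the rightmost spine) ordered lexicographically) and verify that each of the four clauses of $\desi{\cdot}$ strictly decreases it. Once this measure is fixed, the four case analyses above are routine. As this is essentially the ``choose-your-edge'' correctness statement noted in~\cite{DBLP:journals/sLogica/GrooteL02} transported to our sequent presentation, no new substructural phenomena intervene, and the proof closes.
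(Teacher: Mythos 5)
Your proposal is correct and follows essentially the same route as the paper: induction along the recursive clauses of the designator, applying the inductive hypothesis to the right-hand side of each clause and closing each case with a short chain of the generators of $\sim$ (your case-3 chain exchange--associativity--exchange is a trivial variant of the paper's associativity--exchange--associativity). Your only genuine addition is making the well-founded measure explicit, which the paper leaves implicit in the termination remark following the definition of the designator.
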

%
%
%
%
%
%
%
%
%

While the above lemma says that we can designate any substructure as the one that should appear on the right, the following says that this happens uniquely.

\begin{lemma}
If $\Tx{*}\sim\Xx{*}$, then $\desi{\Tx{*}}\equiv\desi{\Xx{*}}$.
\end{lemma}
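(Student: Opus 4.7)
The plan is to reduce the statement to a single-step case and then verify it by a direct, finite case analysis on the definition of the designator. Note that the three generating rewrites $(\E)$, $(\A1)$, $(\A2)$ act only on the top-level shape $(\Gamma,\Delta)$ or $((\Gamma,\Delta),\Pi)$ of the structure (they are sequent-calculus rules applied at the root, not inside subtrees), so the equivalence $\sim$ is generated by \emph{top-level} rewrites. Hence by induction on the length of a derivation $\Tx{*}\sim\Xx{*}$ and the reflexive/symmetric/transitive closure, it suffices to show the statement when $\Tx{*}$ and $\Xx{*}$ differ by a single application of one of the three rules.

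For each rule I would case-split on which of the top-level parts contains the hole $*$. This produces eight subcases (two for $\E$, three for $\A1$, three for $\A2$), and $\A2$ is just $\A1$ read upside down, so really only five distinct computations are needed. In each subcase I would unfold the definition of $\desi{(\cdot)}$ on both sides: for example, in the $\E$-case with $*$ in $\Gamma$, the first clause of the designator immediately rewrites $\desi{(\Gamma,\Delta)}$ to $\desi{(\Delta,\Gamma)}$, which is literally the right-hand side; in the $\A1$-case with $*$ in $\Delta$, clauses~1 and~2 applied to $\desi{((\Gamma,\Delta),\Pi)}$ and clause~3 applied to $\desi{(\Gamma,(\Delta,\Pi))}$ both collapse to $\desi{((\Pi,\Gamma),\Delta)}$; and in the $\A1$-case with $*$ in $\Pi$, clause~2 rewrites $\desi{(\Gamma,(\Delta,\Pi))}$ to $\desi{((\Gamma,\Delta),\Pi)}$ directly. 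In each of the remaining cases the two sides either already agree syntactically after one designator clause or meet after two.

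The only thing to check carefully before these case computations is that the designator is deterministic: the four defining clauses partition the possible shapes of a structure containing a single $*$ (left/right-right/right-left/immediately right), so exactly one clause fires at every step, and the recursion terminates because each clause strictly decreases the depth of $*$ along the right branch. Once this is recorded the case analysis becomes mechanical.

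I expect the main obstacle to be purely bookkeeping: keeping track of which clause applies in each subcase and matching up the resulting expressions symbol-for-symbol. There is no conceptual difficulty—no nontrivial induction on the depth of the context, no need for a cancellation lemma of the form ``$(S_1,*)\sim(S_2,*)$ implies $S_1\equiv S_2$''—because the rewrites involved in $\sim$ never reach inside a subtree, so the subtrees $\Gamma,\Delta,\Pi$ can be treated as opaque throughout the computation.
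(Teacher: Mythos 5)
Your proposal is correct and follows essentially the same route as the paper's own proof: reduce to a single top-level rewrite step (justified because $\sim$ is generated only by the root-level rules $\E$, $\A1$, $\A2$ and $\equiv$ is an equivalence), then case-split on which top-level component contains the hole and unfold the designator clauses until both sides meet, treating the subtree containing $*$ as an opaque context. The sample computations you give (e.g.\ the middle-position associativity case collapsing both sides to $\desi{((\Pi,\Gamma),\Dx{*})}$) coincide with the paper's calculations, so no gap remains beyond the routine bookkeeping you already identify.
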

%
%
%
%
%
%
%
%
%

\begin{corollary}[Uniqueness]
If $(\Gamma,\Pi)\sim(\Delta,\Pi)$ both contain a distinguished occurrence of $\Pi$, then $\Gamma\equiv\Delta$ (and thus $(\Gamma,\Pi)\equiv(\Delta,\Pi)$ as well).
\end{corollary}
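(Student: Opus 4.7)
The plan is to reduce this corollary directly to the preceding uniqueness lemma for designators by choosing the hole to be the distinguished occurrence of $\Pi$.

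Concretely, I would form the contexts with a hole $\Tx{*} :\equiv (\Gamma, *)$ and $\Xx{*} :\equiv (\Delta, *)$, where $*$ marks the distinguished occurrence of $\Pi$ in each of $(\Gamma,\Pi)$ and $(\Delta,\Pi)$. Because the hypothesis $(\Gamma,\Pi) \sim (\Delta,\Pi)$ comes with a distinguished occurrence tracked on both sides, we have $\Tx{*} \sim \Xx{*}$ as contexts with a hole. Applying the preceding lemma yields $\desi{\Tx{*}} \equiv \desi{\Xx{*}}$.

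Now I would unfold the designator on each side using the fourth clause of its defining recursion, namely $\desi{(\Gamma,*)} :\equiv \Gamma$ and $\desi{(\Delta,*)} :\equiv \Delta$. This gives $\Gamma \equiv \Delta$ immediately, and hence $(\Gamma,\Pi) \equiv (\Delta,\Pi)$ as well.

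The only subtle point, and the place where one has to be careful, is justifying that $\Tx{*} \sim \Xx{*}$ as contexts (not merely as structures after plugging $\Pi$ back in). This requires taking seriously the clause in the corollary's statement that the distinguished occurrences are preserved by the $\sim$-equivalence, so that the moves witnessing $(\Gamma,\Pi) \sim (\Delta,\Pi)$ never rewrite across or into the marked $\Pi$. Once this is granted, the argument is a one-line application of the lemma together with the base clause of the designator; no induction on the derivation of $\sim$ is needed here, since that work has already been absorbed into the preceding lemma.
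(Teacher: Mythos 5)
Your proposal is correct and is essentially the paper's own proof: both replace the distinguished occurrence of $\Pi$ by a hole to get $(\Gamma,*)\sim(\Delta,*)$, apply the preceding lemma to conclude $\desi{(\Gamma,*)}\equiv\desi{(\Delta,*)}$, and then read off $\Gamma\equiv\Delta$ from the base clause of the designator. Your remark about needing the $\sim$-moves to respect the marked occurrence is exactly the (implicit) content of the paper's phrase ``since the occurrence of $\Pi$ is distinguished.''
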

%
%

Finally we present another helpful technical lemma which's proof is a straightforward induction on the definition of the designator.
\begin{lemma}[Independent Substructure Preservation]
If $\Delta$ is a substructure of $\Rx{\Delta}\{*\}$ (that does not contain $*$), then $\Delta$ is a substructure of $\desi{\Rx{\Delta}\{*\}}$, and further replacing $\Delta$ by $\Pi$ in $\desi{\Rx{\Delta}\{*\}}$ yields $\desi{\Rx{\Pi}\{*\}}$.
\end{lemma}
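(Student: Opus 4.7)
My plan is to proceed by induction on the recursive definition of $\desi{\cdot}$, where the inductive measure is the depth of the hole $*$ within the structure being designated. The argument then reduces to a straightforward case analysis on which of the four clauses of the definition applies to $\Rx{\Delta}\{*\}$, mirroring the case analysis used to show the definition was well-formed.

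First, I would dispatch the base case (clause 4): if $\Rx{\Delta}\{*\} \equiv (\Gamma, *)$, then $\Delta$ necessarily lies inside $\Gamma$, and $\desi{(\Gamma,*)} \equiv \Gamma$, so the first claim is immediate. For the replacement claim, note that $\Rx{\Pi}\{*\}$ has the form $(\Gamma', *)$ where $\Gamma'$ is obtained from $\Gamma$ by swapping $\Delta$ for $\Pi$, and thus $\desi{\Rx{\Pi}\{*\}} \equiv \Gamma'$, which is exactly what the substitution produces on the output side.

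For the inductive clauses 1--3, the crucial observation is that each clause rewrites the structure to an equivalent one by permuting \emph{entire} top-level subtrees: an exchange in clause 1, a right-associating reassociation in clause 2, and a rotation in clause 3. No clause ever splits or fuses a subtree. Consequently, $\Delta$, which by hypothesis does not contain $*$, sits wholly inside one of the preserved syntactic pieces $\Gamma$, the $\Delta$-slot of $\Rx{\cdot}$, or $\Pi$, and therefore survives the rewriting step as a substructure. Likewise, the rewriting step commutes with the substitution $\Delta \mapsto \Pi$ \emph{before} taking the designator, because the rewriting only reshuffles the pieces that are not the $\Delta$-piece. Invoking the induction hypothesis on the rewritten structure (whose right-branch depth of $*$ is strictly smaller) closes each case.

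The main obstacle, modest as it is, lies in clause 3, where $\desi{(\Gamma,(\Dx{*},\Pi))} \equiv \desi{((\Pi,\Gamma),\Dx{*})}$: the subtree $\Pi$ is uprooted from deep inside the right branch and grafted onto the left. One must separately check each of the subcases where $\Delta$ lives in $\Gamma$, in $\Pi$, or inside $\Dx{\cdot}$ off the hole-path, and in each verify both that $\Delta$ reappears as an identifiable substructure of the rewritten tree and that $\Delta \mapsto \Pi$ commutes with the rewrite. Once that is written out, the remaining clauses are strictly lighter versions of the same bookkeeping argument.
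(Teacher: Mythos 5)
Your proposal is correct and follows exactly the route the paper intends: the paper gives no written-out proof, stating only that the lemma follows by ``a straightforward induction on the definition of the designator,'' and your case analysis on the four clauses --- observing that each clause permutes whole subtrees without splitting them, so both the independent substructure and the substitution $\Delta\mapsto\Pi$ pass through each rewriting step and the induction hypothesis closes the argument --- is precisely that induction made explicit. Your treatment of the base case and of clause 3 (the only mildly delicate case) is sound, so nothing further is needed.
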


Finally, from now on derivations will be considered {\em modulo} designator, in the sense that the operations for determining the designator are not really performed: they should be seen simply as a handy representation of formulae, and not as syntactic manipulations over them. Under this view, we write
\[
\infer=[\sim]{\seq \Gamma}{\seq \Delta}
\quad
\text{for}
\quad
\Gamma\sim\Delta.
\]
only as a convenient representation, not a formal inference rule. 

\newcommand{\cc}{\#_{\C}} 
\newcommand{\rc}{\#_{\R}} 

\subsection{Cut Elimination}
We end this section by presenting the sketch of the proof of admissibility of the cut rule in $\CacLL$. The complete proof is in Appendix~\ref{app:cut}.

\begin{theorem}
If a sequent $\seq\Gamma$ is provable in $\CacLL$, then there is a proof in which the $\cut$ rule is not applied.
\end{theorem}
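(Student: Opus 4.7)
The plan is to prove cut admissibility by a double induction in the style of Gentzen, tailored to accommodate contraction and the tricky subexponential promotion rule. Concretely, I would define on each proof a weight $\cc(\pi)$ that counts occurrences of the contraction rule $\nynot{}\C$, where the weight is doubled through a cut (so $\cc$ of a cut is $2^{\cc(\rho)}\cc(\pi)+2^{\cc(\pi)}\cc(\rho)$), and let $\rc(\pi)$ be a rank that behaves the same way under cut but is left unchanged by the purely structural rules $\E, \A1, \A2$ (and hence is invariant under $\sim$-rewriting). The outer induction is on the complexity of the cut formula $A$; the inner induction is on $\rc(\pi)+\rc(\rho)$, where $\pi$ and $\rho$ are the two subproofs feeding the topmost cut to be eliminated.

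After the measures are in place I would proceed by case analysis on the last rules of $\pi$ and $\rho$, grouping the cases as follows. Axiom/unit cases ($\init$, $\one$, $\top$) are immediate. Commutation cases, in which the cut formula is not principal in one premise, permute the cut upward; this decreases $\rc$ on the relevant branch and is justified because the cut complexity stays fixed while the rank strictly drops. For the principal/principal ``key'' cases ($\tensor$ vs $\parr$, $\oplus_i$ vs $\with$, $\bot$ vs $\one$, and $\prom$ vs $\der$), I would perform the standard reductions producing cuts on the strict subformulas of $A$, which are handled by the outer inductive hypothesis. The structural rules $\E, \A1, \A2$ permute through cuts trivially using the equivalence $\sim$ and the Correctness/Uniqueness lemmas established in Section~\ref{sect:equivalence}, so that rearranging the context of a cut is legal modulo designator.

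The delicate cases are the subexponential ones. When $\rho$ ends in $\nynot{}\C$ contracting the cut formula $\nquest{c}A$, I would duplicate the left premise and perform two cuts against the two copies of $\nquest{c}A$ in the contracted premise; this is precisely why $\cc$ doubles through a cut, so $\rc$ still strictly decreases after the reduction. Cuts against $\nynot{}\W$ reduce to one weakening and the remaining subproof. For the $\prom$ vs $\prom$/commutation case on a subexponential-marked formula, I would check that the operation $\Gamma^{\upset i}$ and the upward-closure condition $i \preceq j \Rightarrow f(i)\subseteq f(j)$ from Definition~\ref{def:sdmls} guarantee that the surviving context after cut still satisfies the promotion side condition, so a lower promotion can be reassembled. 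The structural subexponential rules $\nynot{}\A1, \nynot{}\A2, \nynot{}\E$ commute with cut in the same way as the top-level structural rules, relying again on $\sim$-equivalence.

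The principal obstacle is, as always, the contraction case against promotion: we must simultaneously duplicate a proof ending in $\prom$ and track that the weight $\rc$ strictly decreases despite the duplication. The doubling definition of $\rc$ under cut is designed exactly to absorb this duplication, so that after the reduction the sum of ranks on the two new cuts (each with fewer contractions above) is strictly smaller than before. Once this accounting is verified, all remaining cases are routine permutations, and the theorem follows. The full bookkeeping is deferred to Appendix~\ref{app:cut}.
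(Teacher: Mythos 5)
Your overall architecture (outer induction on cut-formula complexity, case analysis on last rules, commutation vs.\ principal cases) matches the paper's, but you handle the crucial contraction case by a different device, and that is where there is a genuine gap. In $\CacLL$ the two occurrences of the contracted structure containing $\nquest{c}A$ sit in one and the same premise, so your ``duplicate the left premise and cut twice'' reduction necessarily produces \emph{nested} cuts: the outer cut's right subproof is the inner cut, not a subproof of the original $\rho$. Your claimed strict decrease of $\rc$ then fails. Take $\cc(\pi)=1$ and $\cc(\rho')=0$ (where $\rho'$ is the premise of the contraction): the original cut has $\rc = 2\,\rc(\pi)+2\bigl(\rc(\rho')+1\bigr)$, whereas the nested pair has $\rc = 2^{\,\cc(\pi)+\rc\text{-independent exponent}}\cdots$, concretely $4\,\rc(\pi)+4\,\rc(\rho')$, which is strictly larger since every proof has $\rc\geq 1$; the doubling $2^{\cc}$ absorbs the duplication only when $\cc(\pi)=0$, and nothing forces the left subproof to be contraction-free. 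The alternative reading --- first eliminate the inner cut by the inner induction hypothesis and then cut again --- also does not go through, because the cut-free proof returned by the IH has uncontrolled rank while the cut formula $\nquest{c}A$ is unchanged, so neither induction hypothesis applies. This is exactly the classical impasse that the paper breaks by introducing the $(\mix)$ rule (a single generalized cut of $\seq(\Gamma,\nbang{c}A^{\perp})$ against \emph{all} marked occurrences at once) and eliminating $(\cut)$ and $(\mix)$ simultaneously, by induction on the complexity of the cut formula (counting the $\nynot{c}$ prefix for $\mix$) and then on the depth of the application: a cut meeting a contraction on the cut formula is converted into a $\mix$ applied above the contraction, which keeps the promotion premise single and makes the depth decrease.

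A second, related omission: even granting termination, cutting twice against the two copies duplicates the promotion context $\Gamma$ of the left premise. Recovering a single copy requires contracting the $\succeq i$-labelled part of $\Gamma$ (legitimate only via upward closure, since $\C\in f(c)$ propagates to larger labels) and weakening the remaining $\W$-licensed leaves; moreover, contraction in this system acts on whole $\nquest{c}$-marked structures that may properly contain the cut formula, so the reduction also duplicates the surrounding marked material, which must be re-contracted and further perturbs the $\cc$-accounting. The paper's $\mix$-based reduction sidesteps all of this by never duplicating $\Gamma$: only the residual $\nquest{c}$-structure is contracted after the $\mix$. To salvage your route you would need either a corrected measure that provably decreases under the nested reduction, or to strengthen the induction statement to a simultaneous cut against several occurrences --- which is precisely the $\mix$ rule in disguise.
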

\begin{proof}
We prove cut elimination in a standard syntactic way.  Following \eg\ \cite{DBLP:journals/mscs/KanovichKNS19}  we introduce the following $(\mix)$ rule, and simultaneously eliminate $(\cut)$ and $(\mix)$.

\[
\infer[\mix]{\seq(\Gamma,\Delta\Ex{}\cdots\Ex{})}{
    \seq(\Gamma,\nbang{c}A^{\perp}) &
    \seq(\nynot{c}A,\Delta\Ex{\nynot{c}A}\cdots\Ex{\nynot{c}A})
}
\]
Note that $(\mix)$ is equivalent to a $(\cut)$ followed by (possibly several) applications of $(\C)$, which can be applied since we assume that $\C\in f(c)$.

It is sufficient to prove the claim for one application of $(\cut)$ or $(\mix)$, and we prove this restricted claim jointly for $(\cut)$ and $(\mix)$ by nested induction on $\kappa$, first on the complexity of the $(\cut)$ formula, and then on $\delta$ the depth of the $(\cut)$ or $(\mix)$ application.  Here, we include the $\nynot{c}$ in the cut formula complexity in the case of $(\mix)$.

In each considered case we modify the proof to either remove the $(\cut)$ or $(\mix)$, decrease the complexity of the cut formula, or decrease the depth while maintaining the complexity.
\end{proof}

\newcommand{\trans}[1]{\widehat{#1}}
\newcommand{\ntrans}[1]{\widehat{#1}^{\perp}}

\section{Embedding}\label{sec:emb}
Embedding a classical system with involutive negation into its intuitionistic version is often a matter of finding a ``good translation'' (such as Gentzen-G\"odel's double negation~\cite{DBLP:journals/corr/abs-1101-5442}). The other way around may be tricky, though, sometimes even impossible without collapsing provability to the target logic.

In this section, we will show an embedding 
of the intuitionistic system $\acLL$\footnote{Please refer to~\cite{DBLP:conf/cade/BlaisdellKKPS22} for the rules of the sequent system $\acLL$. In a nutshell,  $\acLL$ is a two-sided version of $\CacLL$, with sequents containing structures as antecedent and a single formula in the succedent. Moreover, the connectives are restricted to: $\with,\oplus, \otimes, \to,\la, \top, \nbang{i},\one$, where $\to,\la$ are the non-commutative linear implications.} into the classical system $\CacLL$, with the same $\SDML$ signature.

Consider the translation \;$\trans{\cdot}$\; on formulae defined below.
\[
\begin{array}{rclcrcl}
    \trans{p} & :\equiv & p & \hspace{0.75in} &
    \trans{A\tensor B} & :\equiv & \trans{A}\tensor\trans{B} \\
    \trans{A\to B} & :\equiv & \trans{A}^{\perp}\parr\trans{B} &&
    \trans{B\la A} & :\equiv & \trans{B}\parr\trans{A}^{\perp} \\
    \trans{A\with B} & :\equiv & \trans{A}\with\trans{B} &&
    \trans{A\oplus B} & :\equiv & \trans{A}\oplus\trans{B} \\
    \trans{\nbang{i}A} & :\equiv & \nbang{i}\trans{A} &&
    \trans{\one} & :\equiv & \one\\
    \trans{\top} & :\equiv & \top
\end{array}
\]
This translation is extended this to structures by the following:
\[
\ntrans{(\Gamma,\Delta)}:\equiv (\ntrans{\Delta},\ntrans{\Gamma})
\]

Note both that the order is reversed by the tight negation and also that we will only every need the negative translation for structures.

We will show this embedding is faithful if no subexponentials license associativity.

\begin{theorem}
If for all labels $i$ in the signature $\Sigma$ we have $f(i)\subseteq \{\C,\W,\E\}$, then an $\acLL$ sequent $\Gamma\seq A$ is provable iff $\seq (\ntrans{\Gamma},\trans{A})$ is provable in $\CacLL$.
\end{theorem}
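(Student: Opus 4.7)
For the forward direction (simulation of $\acLL$ in $\CacLL$), I would induct on a derivation of $\Gamma\seq A$ in $\acLL$. Each rule translates essentially syntactically: the additive rules and the rules for $\nbang{i}$ with $\C,\W,\E\in f(i)$ map onto the corresponding $\CacLL$ rules on the nose; the multiplicative rules require a small amount of top-level reshuffling (absorbed into the $\sim$-equivalence), e.g.\ $\to_R$ on $\Gamma,A\seq B$ becomes a $\parr$ step on $\seq(\ntrans{\Gamma},\trans{A}^{\perp}\parr\trans{B})$, while the left-implication rules correspond to applications of $\tensor$ after observing that the de~Morgan dual of $\trans{A\to B}$ has the form $\trans{B}^{\perp}\tensor\trans{A}$ (and symmetrically for $\la$).

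For the backward direction I would invoke cut elimination to fix a cut-free proof $\pi$ of $\seq(\ntrans{\Gamma},\trans{A})$ in $\CacLL$ and argue by induction on $\pi$. The invariant I would maintain is: every sequent $\seq\Theta$ occurring in $\pi$ is $\sim$-equivalent to one of the form $\seq(\ntrans{\Sigma},\trans{B})$ for some $\acLL$-structure $\Sigma$ and formula $B$, and moreover $\Sigma\seq B$ is derivable in $\acLL$. Because $\pi$ is cut-free, every formula in $\pi$ is a subformula of $(\ntrans{\Gamma},\trans{A})$, so each such formula inherits a unique polarity (positive if it is a subformula of $\trans{B}$, negative if it sits inside $\ntrans{\Sigma}$), and the shape of its principal connective is forced to be consistent with that polarity. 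In particular $\parr$ can only be principal at a positive occurrence, and its premise is read back as either $\to_R$ or $\la_R$ depending on which sub-position of the $\parr$ carries the unique positive leaf.

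The structural cases are then handled uniformly by the designator machinery of Section~\ref{sect:equivalence}: the top-level rules $\E,\A1,\A2$ of $\CNL$ merely reshape the tree at its root, so by the uniqueness of the designator we can always pull the positive formula back to the conventional right branch and recover a canonical $(\ntrans{\Sigma},\trans{B})$ presentation without altering $\Sigma$ or $B$ modulo $\sim$. The subexponential structural rules $\nynot{}\E$, $\nynot{}\W$, $\nynot{}\C$ act on negative substructures in a way literally mirrored by the analogous $\acLL$ rules, and $\prom,\der$ correspond directly to the $\nbang{i}$ rules of $\acLL$ (with the silent erasure of weakenable formulae folded into $\nynot{}\W$).

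The main obstacle, and the precise reason for the hypothesis $f(i)\subseteq\{\C,\W,\E\}$, is that $\nynot{}\A1$ and $\nynot{}\A2$ would permit reassociation \emph{inside} the tree $\Theta$, producing structures no longer $\sim$-equivalent to any $\ntrans{\Sigma}$ coming from an $\acLL$-structure, since $\acLL$ has no top-level-only associativity principle available to match them. Ruling these out ensures that every structural move in a cut-free $\CacLL$ proof is either top-level (and hence swallowed by $\sim$) or acts on a negative substructure in a manner already reflected by a genuine $\acLL$ rule. With this observation in place, the induction closes by a routine case analysis on the last rule of $\pi$, and the two translations are mutually inverse at the level of endsequents.
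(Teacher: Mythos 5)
Your forward direction matches the paper's soundness argument (induction on $\acLL$ proofs, translating each rule into a $\CacLL$ derivation fragment), and your overall plan for the converse---fix a cut-free $\CacLL$ proof and read each rule back as an $\acLL$ rule---is also the paper's. The genuine gap is in how you sustain the key invariant. You claim every sequent in the proof is $\sim$-equivalent to one of the form $(\ntrans{\Sigma},\trans{B})$ and justify this only by the subformula property (``each formula inherits a unique polarity''). The subformula property does give that every formula occurring in the proof is of the form $\trans{C}$ or $\ntrans{C}$, but it says nothing about \emph{how many} positive formulae a given sequent contains, and that is exactly where the invariant can break. Concretely, when $(\tensor)$ is applied to a negative principal formula $\ntrans{B\la A}=\trans{A}\tensor\ntrans{B}$ (or $\ntrans{A\to B}=\ntrans{B}\tensor\trans{A}$), the context split may a priori send the conclusion's unique positive formula into the premise paired with the positive component $\trans{A}$, leaving the other premise all-negative and the first with two positives; neither premise then has your invariant form, and nothing in your argument excludes this split. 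The paper closes precisely this hole with the counter $\ctr$ and the Intuitionistic Polarization lemma: $\ctr(\trans{C})=0$, $\ctr(\ntrans{C})=1$, and any provable sequent consisting of $n$ translated formulae has counter-sum $n-1$, so it has exactly one positive formula; hence the bad splits lead to unprovable premises and those cases are impossible. Without this (or some equivalent unprovability argument for non-polarizable sequents), your induction does not go through.

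Two smaller inaccuracies. First, $\parr$ can also be principal at a \emph{negative} occurrence, since $\ntrans{A\tensor B}=\ntrans{B}\parr\ntrans{A}$; such an occurrence is read back as $\tensor L$, not as $\to_R$ or $\la_R$. Second, your explanation of why associativity must be excluded is not quite the right one: the structures produced by $\nynot{}\A1$ and $\nynot{}\A2$ are still translations of intuitionistic structures; the actual problem is that these rules, combined with the top-level cyclic rules, license inferences that the two associativity rules of $\acLL$ cannot match---the paper exhibits $((a\tensor b)\tensor \nbang{a}c)\to(a\tensor (b\tensor \nbang{a}c))$, whose translation is provable in $\CacLL$ while the sequent has no $\acLL$ proof---which is why completeness with associativity is recovered only for the extended system $\acLL^+$ with six associativity rules.
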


We start with the easier direction, showing that this embedding is sound.  The embedding is sound, even with the inclusion of associativity.

\begin{lemma}[Soundness]
If an $\acLL$ sequent $\Gamma\seq A$ is provable, then $\seq (\ntrans{\Gamma},\trans{A})$ is provable in $\CacLL$.
\end{lemma}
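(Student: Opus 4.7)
The plan is to proceed by induction on the height of the $\acLL$-derivation of $\Gamma\seq A$, preceded by a short preliminary lemma: the \emph{generalized identity principle} $\seq(\trans{F}^\perp,\trans{F})$ is derivable in $\CacLL$ for every $\acLL$-formula $F$. This follows by a routine secondary induction on $F$: the atomic case is $(\init)$; the multiplicative, additive, and unit cases use the corresponding right introduction rules together with top-level $(\A1)$, $(\A2)$, $(\E)$ to align the subtrees; and the case $F \equiv \nbang{i}G$ uses $(\der)$ on the $\nquest{i}\trans{G}^\perp$ side followed by $(\prom)$ on the $\nbang{i}\trans{G}$ side, where the minimal one-formula context $\Gamma := \nquest{i}\trans{G}^\perp$ trivially satisfies $\Gamma^{\upset i} = \Gamma$ by reflexivity of $\preceq$. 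The base case of the main induction, the atomic axiom $p \seq p$, is then immediate.

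For the inductive step I would dispatch each rule of $\acLL$ in turn. The right rules for $\otimes$, $\with$, $\oplus$, $\one$, $\top$, and $\nbang{i}$, together with all the subexponential structural rules (contraction, weakening, exchange, and associativity when licensed), translate directly to their $\CacLL$ counterparts; one only needs to track the order reversal induced by the clause $\ntrans{(\Gamma,\Delta)} := (\ntrans{\Delta},\ntrans{\Gamma})$. The right rules for $\to$ and $\la$ become applications of $(\parr)$ after a top-level reassociation of the translated premise, so that $(\trans{A}^\perp,\trans{B})$ (respectively $(\trans{B},\trans{A}^\perp)$) appears as the appropriate subtree before $(\parr)$ is applied. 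The left rule for $\otimes$ and the left rules for the additives are handled similarly: apply $(\parr)$ or $(\with)$/$(\oplus)$ in the designated position, preceded by a top-level $(\E)$ to correct for the pair reversal that $\ntrans{\cdot}$ introduces.

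The genuinely delicate cases are the left rules for the two implications. To derive $\seq(\ntrans{\Gamma[(\Delta,A\to B)]},\trans{C})$ from the translations of $\Delta\seq A$ and $\Gamma[B]\seq C$, my strategy is to apply the classical $(\otimes)$ rule to the generalized identity $\seq(\trans{B},\trans{B}^\perp)$ and the first translated premise so as to build a sequent with right-hand formula $\trans{A}\otimes\trans{B}^\perp = \trans{A\to B}^\perp$, and then $\cut$ the result against the second translated premise on $\trans{B}$, after positioning the two halves correctly via top-level $(\A1)$, $(\A2)$, $(\E)$ (applied to arbitrary substructures through the designator formalism of Section~\ref{sect:equivalence}). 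The case for $\la$ is symmetric, differing only in the placement of the negated component. Since $(\cut)$ is part of the calculus, its use here is legitimate, independently of whether cut elimination has been carried out.

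The main obstacle I anticipate is bookkeeping: the order reversal produced by $\ntrans{\cdot}$, combined with the strictly top-level character of the structural rules $(\A1)$, $(\A2)$, $(\E)$, forces careful tracking of designator moves in nearly every inductive case. A secondary care point arises in the promotion case, where any formulae of the form $\nquest{k}G$ with $k\not\preceq i$ and $\W\in f(k)$ that are implicitly erased when forming $\Gamma^{\upset i}$ must line up with weakenings permissible on the $\acLL$ side; this is guaranteed by the same condition $\W\in f(k)$ together with the upward-closure assumption on $f$.
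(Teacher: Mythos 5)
Your overall plan---induction on the $\acLL$ derivation, translating each rule to a $\CacLL$ derivation fragment and tracking the order reversal of $\ntrans{(\Gamma,\Delta)}=(\ntrans{\Delta},\ntrans{\Gamma})$ with top-level structural moves---is exactly the paper's proof for all the cases you call routine, including promotion (where your remark about implicitly weakened $\nquest{k}$-formulae with $\W\in f(k)$ is the right one). Where you diverge is in the left implication cases, and there your route is both heavier than necessary and, as written, contains an error. The paper needs no cut and no generalized identity lemma here: since $\ntrans{A\to B}=\ntrans{B}\tensor\trans{A}$, the classical $(\tensor)$ rule applied to the translated premises $\seq(\ntrans{\Delta},\trans{A})$ and the re-designated form $\seq(\desi{(\Rxn{*},C)},\ntrans{B})$ of $\seq(\Rxn{\ntrans{B}},C)$ \emph{directly} introduces the dual of the implication, and only top-level $(\A1),(\E)$ moves remain; the $\la L$ case is symmetric.

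The concrete error in your detour is the de Morgan computation: you take $\trans{A\to B}^{\perp}$ to be $\trans{A}\tensor\trans{B}^{\perp}$, but in this non-commutative setting negation reverses multiplicative order (this is visible in the paper's $\tensor L$ case, where $\ntrans{A\tensor B}=\ntrans{B}\parr\ntrans{A}$, and is forced by the structure clause $\ntrans{(\Gamma,\Delta)}=(\ntrans{\Delta},\ntrans{\Gamma})$ together with the top-level-only structural rules). The correct dual is $\ntrans{B}\tensor\trans{A}$; the formula you build, $\trans{A}\tensor\ntrans{B}$, is $\ntrans{B\la A}$, so your construction produces the translation of the \emph{other} implication, and since structural rules cannot rearrange subformulae inside a formula, the resulting derivation does not conclude the intended sequent. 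With the order corrected, your cut-plus-identity-expansion argument does go through (cut is in the calculus and admissible, and the identity expansion $\seq(\trans{F}^\perp,\trans{F})$ is indeed a routine induction, with the $\nbang{i}$ case handled by dereliction then promotion using $i\preceq i$), but it buys nothing over the direct tensor step: it obscures the fact that cut-free $\acLL$ proofs translate to cut-free $\CacLL$ proofs, which the paper's construction gives for free.
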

\begin{proof}
We prove this directly be induction on proofs by showing that the translations of each $\acLL$ rule is a valid $\CacLL$ partial proof.  Consider the bottom rule of a proof. We will show some key cases, the others are in the Appendix~\ref{app:sound}.
%
%
%
%
\[
\infer[\la L]
{\Rx{(B\la A,\Delta)}\seq C}
{\Delta\seq A & \Rx{B}\seq C}
\qquad
\rwto
\qquad
\infer=[\sim]{\seq (\Rxn{(\ntrans{\Delta},\trans{A}\tensor\ntrans{B})},C)}{
\infer[\A1]{\seq (\desi{(\Rxn{*},C)},(\ntrans{\Delta},\trans{A}\tensor\ntrans{B}))}{
\infer[\tensor]{\seq ((\desi{(\Rxn{*},C)},\ntrans{\Delta}),\trans{A}\tensor\ntrans{B})}{
    \seq(\ntrans{\Delta},\trans{A}) &
    \infer=[\sim]{\seq(\desi{(\Rxn{*},C)},\ntrans{B})}{
    \seq(\Rxn{\ntrans{B}},C)
    }
}}}
\]
\[
\infer[\E1]{\Rx{(\nbang{e}\Delta,\Pi)}\seq C}{
\Rx{(\Pi,\nbang{e}\Delta)}\seq C
}
\qquad
\rwto
\qquad
\infer[\sim]{\seq(\Rxn{(\ntrans{\Pi},\nynot{e}\ntrans{\Delta})},\trans{C})}{
\infer[\A2]{\seq(\desi{(\Rxn{*},\trans{C})},(\ntrans{\Pi},\nynot{e}\ntrans{\Delta}))}{
\infer[\nynot{}\E]{\seq((\desi{(\Rxn{*},\trans{C})},\ntrans{\Pi}),\nynot{e}\ntrans{\Delta})}{
\infer=[\A1,\E,\A1]{\seq((\ntrans{\Pi},\desi{(\Rxn{*},\trans{C})}),\nynot{e}\ntrans{\Delta})}{
\infer=[\sim]{\seq(\desi{(\Rxn{*},\trans{C})},(\nynot{e}\ntrans{\Delta},\ntrans{\Pi}))}{
\seq(\Rxn{(\nynot{e}\ntrans{\Delta},\ntrans{\Pi})},\trans{C})
}}}}}
\]
\end{proof}
We now prove the more surprising direction, that the embedding of the intuitionistic system into the classical system is complete.

We start by 
proposing a counter on formulae, which is an extension of the counter defined in~\cite{DBLP:conf/rta/KanovichKMS17}, in its turn an extension of the counter in~\cite{Pentus1998}.

\newcommand{\ctr}{\natural}

\begin{definition}
For a $\CacLL$ formula $A$, we define the integer number $\ctr(A)$ by induction as follows.
\[
\begin{array}{rcl c rcl}
    \ctr(p) & := &  0 & \hspace{0.5in} &
    \ctr(A\parr B) & := & \ctr(A) + \ctr(B) - 1 \\
    \ctr(\bar{p}) & := & 1 &&
    \ctr(A\tensor B) & := & \ctr(A)+\ctr(B) \\
    \ctr(\one) & := & 0 &&
    \ctr(A\oplus B) = \ctr(A\with B) & := & \ctr(A) \\
    \ctr(\perp) & := & 1 &&
    \ctr(\nynot{i}A) = \ctr(\nbang{i}A) & := & \ctr(A)
\end{array}
\]
and extend to structures by
\[
\ctr((\Gamma,\Delta)):=\ctr(\Gamma)+\ctr(\Delta)
\]
\end{definition}

We need this counter for the following technical lemmas, which are easily proven by  straightforward induction.

\begin{lemma}
For any $\acLL$ formula $C$, we have $\ctr(\trans{C})=0$ and $\ctr(\ntrans{C})=1$.
\end{lemma}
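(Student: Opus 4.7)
The plan is a straightforward simultaneous induction on the structure of the $\acLL$ formula $C$, establishing both identities $\ctr(\trans{C})=0$ and $\ctr(\ntrans{C})=1$ in tandem. The base cases are immediate: for atomic $C=p$ the translation gives $\trans{p}=p$ with $\ctr(p)=0$ and $\ntrans{p}=\bar p$ with $\ctr(\bar p)=1$; for $C=\one$ one has $\trans{\one}=\one$ with $\ctr(\one)=0$ and $\ntrans{\one}=\perp$ with $\ctr(\perp)=1$; for $C=\top$ one simply supplies the natural extension $\ctr(\top)=0$ (and dually $\ctr(\mathsf{0})=1$), continuing the pattern of the counter on the additives.

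For the inductive step I would dispatch each $\acLL$ connective by a one-line calculation from the clauses of $\ctr$ and the inductive hypothesis. For $C=A\tensor B$: $\ctr(\trans{A}\tensor\trans{B})=0+0=0$ by IH, while $\ntrans{C}=\ntrans{B}\parr\ntrans{A}$ gives $\ctr=1+1-1=1$. For the implications $A\to B$ and $B\la A$, the positive translation is a $\parr$-formula, yielding $\ctr=1+0-1=0$; applying de Morgan to obtain $\ntrans{C}$ produces a $\tensor$-formula whose count is $1+0=1$. The additive cases $A\with B$ and $A\oplus B$ follow directly from $\ctr(A\with B)=\ctr(A\oplus B)=\ctr(A)$ together with the IH, using that $\ntrans{A\with B}=\ntrans{A}\oplus\ntrans{B}$ (and symmetrically for $\oplus$) and that the counter is defined identically on $\with$ and $\oplus$. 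Finally, the subexponential case $\nbang{i}A$ is immediate from $\ctr(\nbang{i}X)=\ctr(\nynot{i}X)=\ctr(X)$ together with $\ntrans{\nbang{i}A}=\nynot{i}\ntrans{A}$.

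There is no genuine obstacle in this proof; its content is essentially a bookkeeping check that the counter $\ctr$ has been calibrated precisely so that $\trans{\cdot}$ uniformly contributes $0$ (a ``balanced input'') and $\ntrans{\cdot}$ uniformly contributes $1$ (a ``single output''), with each connective of $\acLL$ preserving this discipline. The only mild care point is to apply de Morgan duality carefully in the multiplicative and implicative cases --- in the non-commutative setting the tight negation reverses the order of a $\parr$ into a $\tensor$ --- but since $\ctr$ is symmetric in its left and right arguments on both $\tensor$ and $\parr$, these reorderings do not affect the arithmetic, and the induction goes through cleanly.
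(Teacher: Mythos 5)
Your proof is correct and is exactly the argument the paper intends: the paper only remarks that the lemma follows by a straightforward induction on $C$, and your case-by-case simultaneous induction (using that the tight negation turns $\parr$ into $\tensor$ with reversed order, which is harmless since $\ctr$ is symmetric on the multiplicatives) is that induction spelled out. Your explicit stipulation $\ctr(\top)=0$ and $\ctr(\mathsf{0})=1$ sensibly fills a small omission in the paper's definition of $\ctr$, which is needed for the case $C=\top$.
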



\begin{corollary}
A formula cannot be both of the form $\trans{A}$ and $\ntrans{B}$.
\end{corollary}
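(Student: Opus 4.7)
The corollary is an immediate consequence of the preceding lemma, together with the observation that the counter $\ctr$ is a well-defined function on $\CacLL$ formulae (so equal formulae have equal counters). My plan is simply to spell this out as a short contradiction argument.

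First, I would note that by the definition of $\ctr$ on formulae, any $\CacLL$ formula $F$ has a uniquely determined integer value $\ctr(F)$; in particular, if $F$ and $F'$ are syntactically identical formulae, then $\ctr(F) = \ctr(F')$. This is what lets us compare the counter values obtained from two different presentations of the same formula.

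Next, I would invoke the preceding lemma twice. Suppose for contradiction that there exists a $\CacLL$ formula $F$ together with $\acLL$ formulae $A$ and $B$ such that $F \equiv \trans{A}$ and $F \equiv \ntrans{B}$. Applying the lemma to $A$ gives $\ctr(F) = \ctr(\trans{A}) = 0$, while applying it to $B$ gives $\ctr(F) = \ctr(\ntrans{B}) = 1$. Hence $0 = 1$, a contradiction.

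There is no real obstacle here: the whole content is packaged in the previous lemma, and the corollary is just the observation that $\ctr$ separates the image of $\trans{\cdot}$ from the image of $\ntrans{\cdot}$ on formulae. The only thing worth being careful about is that $\ntrans{\cdot}$ was introduced primarily on structures via $\ntrans{(\Gamma,\Delta)} := (\ntrans{\Delta},\ntrans{\Gamma})$, so I would briefly remind the reader that on a single formula $B$ the symbol $\ntrans{B}$ denotes $(\trans{B})^{\perp}$, and that $\ctr$ of a negated formula still makes sense because negation is pushed down to literals. With that clarification the proof is a one-line appeal to the lemma.
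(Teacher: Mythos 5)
Your proof is correct and matches the paper's intended argument: the corollary is an immediate consequence of the preceding counter lemma, since a formula of both forms would have $\ctr$ equal to both $0$ and $1$. Your extra clarification that $\ntrans{B}$ on a single formula means $(\trans{B})^{\perp}$ and that $\ctr$ is well defined is fine but not a departure from the paper's route.
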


\begin{lemma}
Let $\seq\Gamma$ be a provable sequent with $n$ formulae where every formula is of the form $\trans{C}$ or $\ntrans{C}$.  Then $\sum_{A\in\Gamma}\ctr(A)=n-1$.
\end{lemma}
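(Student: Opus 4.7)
The plan is to proceed by induction on the structure of a cut-free proof, invoking the cut-elimination theorem to assume without loss of generality that the derivation contains no applications of $\cut$ (or $\mix$). The invariant to preserve is $\sum_{A\in\Gamma}\ctr(A)=n-1$ under the side condition that every formula of $\Gamma$ is of the form $\trans{C}$ or $\ntrans{C}$. Combined with the preceding lemma (which fixes $\ctr(\trans{C})=0$ and $\ctr(\ntrans{C})=1$) and its corollary (which states that the two forms are disjoint), the invariant amounts to saying that exactly one formula of $\Gamma$ is a positive translation and the remaining $n-1$ are negative translations—the shape expected of the classical image of an intuitionistic one-conclusion sequent.

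The base cases are checked directly. In $\init$, concluding $\seq(A,A^\perp)$, the disjointness corollary forces $A$ to be atomic (otherwise $A$ and $A^\perp$ could not both be of the required form), so $A=p$, giving $\ctr(p)+\ctr(\bar p)=0+1=1=n-1$. The $\one$ rule is immediate since $\ctr(\one)=0=1-1$. For the $\top$ axiom concluding $\seq\Rx{\top}$, the form condition together with the target invariant pins $\Rx{}$ down to consist of negative translations (each contributing $1$ to the counter), while $\top$ itself plays the role of the unique positive formula. For the inductive steps on the propositional rules, each case is a short arithmetic verification: $\tensor$ matches the context split because $\ctr(F\tensor G)=\ctr(F)+\ctr(G)$; $\parr$ absorbs the loss of one formula via the $-1$ built into $\ctr(F\parr G)$; $\perp$ matches the added formula via $\ctr(\perp)=1$; and in $\oplus_i$ and $\with$, the form condition forces $F_1$ and $F_2$ to share a polarity, so $\ctr(F_1\oplus F_2)=\ctr(F_1)=\ctr(F_2)$ keeps premise and conclusion in sync. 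The three structural rules $\E$, $\A1$, $\A2$ trivially preserve the sum since they only permute formulae.

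The subexponential rules are similarly routine once the polarity analysis is in hand. Dereliction $\der$, the structural subexponential rules, and contraction $\nynot{}\C$ do not alter counters. Subexponential weakening $\nynot{}\W$ and the implicit weakening inside promotion $\prom$ require one small observation: any formula of the form $\nquest{k}\Delta$ present in the context must, by the form condition, be a negative translation (since the positive translation $\trans{\cdot}$ never yields a $\nquest$-formula), hence has counter $1$; so removing $k$ such formulae decreases both the formula count and the counter sum by exactly $k$, preserving the invariant. The main obstacle is precisely this coordination between the form condition and the arithmetic: it is the disjointness corollary that converts ``this formula has syntactic shape $\trans{C}$ or $\ntrans{C}$'' into the usable numeric constraint ``its counter is $0$ or $1$,'' and every case of the induction ultimately turns on this translation step.
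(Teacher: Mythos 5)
Your overall strategy---induction on a cut-free derivation (justified by the cut-admissibility theorem), the hereditary observation that immediate subformulae of formulae of the form $\trans{C}$ or $\ntrans{C}$ are again of one of these forms, and the polarity argument needed for $\oplus_i$ and $\with$ (whose counter only reads the first component)---is exactly the ``straightforward induction'' the paper appeals to, and those cases, as well as $\otimes$, $\parr$, $\perp$, the structural rules, dereliction, weakening and promotion, are handled correctly. One small slip: contraction $\nynot{}\C$ \emph{does} change both $n$ and the counter sum; it preserves the invariant only because the duplicated leaves are $\nquest{c}$-formulae, hence negative translations of counter $1$---the same observation you correctly make for $\nynot{}\W$ and for the weakening implicit in $\prom$.

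The genuine gap is your $\top$ case. You write that ``the form condition together with the target invariant pins $\Rx{}$ down to consist of negative translations,'' but the target invariant is precisely what is being proved, so this is circular: the $\top$ rule is an axiom with a completely arbitrary context, and nothing in the hypothesis of the lemma forces that context to be negatively translated. Moreover, this case cannot be patched as the statement stands: $\seq(p,\top)$ is derivable by the $\top$ rule, both $p=\trans{p}$ and $\top=\trans{\top}$ are (positive) translations, yet $\ctr(p)+\ctr(\top)=0\neq n-1=1$, since the preceding lemma $\ctr(\trans{C})=0$ forces $\ctr(\top)=0$ (the paper's definition of $\ctr$ silently omits $\top$ and $\zero$, and no choice of $\ctr(\top)$ saves both $\seq(p,\top)$ and $\seq(\bar{p},\top)$). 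So the counting lemma, read literally, fails in the presence of $\top$, and with it the polarization lemma ($\seq(p,\top)$ has two positive formulae). Your induction goes through verbatim on the $\top$-free fragment, or under an explicit side condition excluding $\top$-axiom instances with positively translated context, but as written the $\top$ case is assumed rather than proved---and it is exactly the point where the induction (and the paper's unstated proof) needs an additional idea or a restriction of the statement.
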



\begin{definition}
We say that a sequent with exactly one formula of the form $\trans{C}$ and the rest of the form $\ntrans{C}$ is \emph{intuitionistically polarizable}.  We call the formula of the form $\trans{C}$ the \em{positive formula}.
\end{definition}

\begin{lemma}[Intuitionistic Polarization]
If a sequent with all formulae are of the form $\trans{C}$ or $\ntrans{C}$ is provable, then exactly one of the formulae is of the form $\trans{C}$, \ie\ it is intuitionistically polarizable.
\end{lemma}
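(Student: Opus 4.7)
The plan is to derive the conclusion purely by arithmetic from the three preceding facts: (i) $\ctr(\trans{C}) = 0$ for every $\acLL$-formula $C$, (ii) $\ctr(\ntrans{C}) = 1$ for every $\acLL$-formula $C$, and (iii) for any provable sequent $\seq \Gamma$ whose $n$ formulae are all of one of those two forms, $\sum_{A \in \Gamma} \ctr(A) = n - 1$. The corollary stating that no formula is simultaneously of both shapes guarantees that the hypothesis partitions $\Gamma$ unambiguously into positive and negative translates.

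So first I would let $n$ be the number of formulae in $\Gamma$, and let $k$ be the number of those formulae of the form $\trans{C}$, so that $n - k$ is the number of those of the form $\ntrans{C}$ (the partition is well-defined by the preceding corollary). Then I would compute
\[
\sum_{A \in \Gamma} \ctr(A) \;=\; k \cdot 0 + (n - k) \cdot 1 \;=\; n - k.
\]
By the sum lemma applied to the assumed provable sequent, this quantity equals $n - 1$, whence $k = 1$. That is exactly the statement that exactly one formula in $\Gamma$ is of the form $\trans{C}$, i.e., $\Gamma$ is intuitionistically polarizable.

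There is no real obstacle here beyond making sure the hypotheses line up: the two counter-value lemmas and their corollary legitimise the case split, and the sum lemma's hypothesis (provability together with every formula being of one of the two forms) matches the hypothesis of the present lemma verbatim. Consequently the proof is essentially a one-line arithmetic deduction once those lemmas are cited, and I would write it as such rather than re-proving any of the counter identities.
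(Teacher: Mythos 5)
Your proof is correct and is essentially the paper's own argument: both count the formulae, use $\ctr(\trans{C})=0$, $\ctr(\ntrans{C})=1$, and the sum lemma $\sum_{A\in\Gamma}\ctr(A)=n-1$ to conclude that exactly one formula is positively translated. Your version is just slightly more explicit in citing the corollary that the two forms are disjoint to justify the partition.
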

\begin{proof}
Let $n$ be the number of formulae in $\Gamma$.  Since by previous lemmas $\ctr(\trans{C})=0$, $\ctr(\ntrans{C})=1$, and $\sum_{A\in\Gamma}\ctr(A)=n-1$, there are exactly $n-1$ formulae of the form $\ntrans{C}$.
\end{proof}

\begin{remark}
Note that any intuitionistically polarizable sequent is structurally equivalent to a unique sequent of the form $(\ntrans{\Gamma},\trans{C})$.
\end{remark}

We now sketch the proof of completeness. The full proof can be found in Appendix~\ref{app:comp}.
\begin{lemma}[Completeness]
Let $\Sigma$ be a subexponential signature where all labels $i$ have $f(i)\subseteq \{\C,\W,\E\}$ and let $\Gamma\seq A$ be an $\acLL$ sequent.  If $\seq (\ntrans{\Gamma},\trans{A})$ is provable in $\CacLL$, then $\Gamma\seq A$ is provable in $\acLL$.
\end{lemma}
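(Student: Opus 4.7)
The plan is to induct on the height of a cut-free $\CacLL$ derivation of $\seq (\ntrans{\Gamma},\trans{A})$, which exists by the cut-elimination theorem. Cut-free proofs enjoy the subformula property, and a direct inspection shows that every subformula of a translated formula is itself of the form $\trans{D}$ or $\ntrans{D}$ for some $\acLL$ formula $D$. Consequently the Intuitionistic Polarization Lemma applies at every node: each provable intermediate sequent has exactly one positive formula $\trans{B}$, and the uniqueness corollary for the designator determines the rest of the structure as a unique $\ntrans{\Xi}$ modulo $\sim$. This canonical decomposition $(\ntrans{\Xi},\trans{B})$ is the ``intuitionistic reading'' that I will thread through the induction, yielding the target $\acLL$ sequent $\Xi \seq B$ at each node.

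For the inductive step I would case-analyze on the bottom rule. The routine cases are: $(\init)$ forces $A = \trans{p}$ and $A^\perp = \ntrans{p}$ for some atom $p$, giving the axiom $p \seq p$; $(\one)$ and $(\top)$ translate directly; and $(\perp)$, which must act on some $\ntrans{\one}$, corresponds to the $\acLL$ left rule for $\one$. The interesting propositional cases split on which side of the polarization the active formula lies. A $\parr$-introduction is read as $(\to R)$ or $(\la R)$ when its result is the positive formula $\trans{A \to B}$ or $\trans{B \la A}$, and as the left rule $(\otimes L)$ when the $\parr$-formula is an antecedent translation $\ntrans{A \otimes B}$. Dually, a $\tensor$-introduction is either $(\otimes R)$ (positive $\trans{A_1 \otimes A_2}$) or $(\to L)$ / $(\la L)$ (antecedent $\ntrans{A \to B}$ or $\ntrans{B \la A}$). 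In each subcase polarization forces the intended splitting of positive/negative formulae between the two premises, so the induction hypothesis yields $\acLL$ derivations of the premises that combine via the indicated $\acLL$ rule, modulo the equivalences $\sim$ needed to place the positive formula at the designated position. The additives $(\with),(\oplus_i)$ and the subexponential rules $(\prom),(\der),(\nynot{}\W),(\nynot{}\C),(\nynot{}\E)$ proceed analogously, using $\trans{\nbang{i}C} = \nbang{i}\trans{C}$, $\ntrans{\nbang{i}C} = \nynot{i}\ntrans{C}$, and the observation that the $\upset{i}$-filtering commutes with $\ntrans{(-)}$ on structures.

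The top-level structural rules $(\E),(\A1),(\A2)$ of $\CacLL$ preserve the sequent modulo $\sim$, so they leave the canonical intuitionistic reading fixed and I would simply pass the induction hypothesis through them without emitting any $\acLL$ inference. The main obstacle, and the place where the hypothesis $f(i) \subseteq \{\C,\W,\E\}$ is indispensable, is the subexponential structural rules. Without the stated restriction, $\nynot{}\A1$ or $\nynot{}\A2$ would permit reassociation deep inside $\ntrans{\Gamma}$ across boundaries that are not identified by $\sim$ (since $\sim$ is only top-level associativity and commutativity), producing classically provable sequents whose polarization is well-defined but whose antecedent structure has no $\acLL$ counterpart. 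Excluding those rules ensures that every remaining subexponential structural step is a commutativity ($\nynot{}\E$), contraction ($\nynot{}\C$), or weakening ($\nynot{}\W$) acting on subexponential-marked formulae, each of which is licensed in $\acLL$ under the same signature. This closes the induction and recovers the desired $\acLL$ derivation of $\Gamma \seq A$.
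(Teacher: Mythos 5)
Your proposal is correct and takes essentially the same route as the paper's own proof: induction over a (cut-free) $\CacLL$ derivation, the counter-based Intuitionistic Polarization Lemma to pin down the unique positive formula and discard the impossible $(\tensor)$ splittings, the designator/independent-substructure machinery to absorb the top-level structural rules silently, and the hypothesis $f(i)\subseteq\{\C,\W,\E\}$ precisely to exclude the subexponential associativity rules. The only detail the paper spells out that you gloss over is the subexponential exchange case where the positive formula sits inside the commuted substructure, which forces the emitted $\acLL$ rule to switch between $\E1$ and $\E2$ — a bookkeeping point, not a gap.
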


\begin{proof}
We prove the theorem by induction on the length of $\CacLL$ proofs.  
If the first nonstructural rule is $(\tensor)$, we must consider the following subcases.
\[
\infer=[\sim]{\seq ((\ntrans{\Delta},\ntrans{\Gamma}),\trans{A}\otimes\trans{B})}{
\infer[\otimes]{\seq ((\ntrans{\Delta},\ntrans{\Gamma}),\trans{A}\otimes\trans{B})}{
\seq (\ntrans{\Gamma},\trans{A}) &
\seq (\ntrans{\Delta},\trans{B})
}}
\quad
\rwto
\quad
\infer[\otimes R]{\Gamma,\Delta\seq A\tensor B}{
\Gamma\seq A &
\Delta\seq B
}
\]
\[
\infer=[\sim]{\seq(\Rxn{\ntrans{\Delta},\trans{A}\otimes\ntrans{B}},\trans{C})}{
\infer[\otimes]{\seq((\desi{(\Rxn{*},\trans{C})},\ntrans{\Delta}),\trans{A}\otimes\ntrans{B})}{
\seq(\ntrans{\Delta},\trans{A}) &
\deduce{\seq(\desi{(\Rxn{*},\trans{C})},\ntrans{B})}{
    \seq (\Rxn{\ntrans{B}},\trans{C}) \sim
}
}}
\quad
\rwto
\quad
\infer[\la L]{\Rx{B\la A,\Delta}\seq C}{
    \Delta\seq A &
    \Rx{B}\seq C
}
\]
\[
\infer=[\sim]{\seq(\Rxn{\ntrans{B}\otimes\trans{A},\ntrans{\Delta}},\trans{C})}{
\infer[\otimes]{\seq((\ntrans{\Delta},\desi{(\Rxn{*},\trans{C})}),\ntrans{B}\otimes\trans{A})}{
\deduce{\seq(\desi{(\Rxn{*},\trans{C})},\ntrans{B})}{
    \seq (\Rxn{\ntrans{B}},\trans{C}) \sim
} &
\seq(\ntrans{\Delta},\trans{A})
}}
\quad
\rwto
\quad
\infer[\la L]{\Rx{\Delta,A\to B}\seq C}{
    \Delta\seq A &
    \Rx{B}\seq C
}
\]
There are two remaining ways that $\tensor$ can appear in the translation of a sequent and be principal.
\[
\infer=[\sim]{\seq(\Rxn{\ntrans{\Delta},\ntrans{B}\otimes\trans{A}},\trans{C})}{
\infer[\otimes]{\seq((\desi{(\Rxn{*},\trans{C})},\ntrans{\Delta}),\ntrans{B}\otimes\trans{A})}{
    \seq(\ntrans{\Delta},\ntrans{B}) &
    \seq(\desi{(\Rxn{*},\trans{C})},\trans{A})
}}
\qquad
\infer=[\sim]{\seq(\Rxn{\trans{A}\otimes\ntrans{B},\ntrans{\Delta}},\trans{C})}{
\infer[\otimes]{\seq((\ntrans{\Delta},\desi{(\Rxn{*},\trans{C})}),\trans{A}\otimes\ntrans{B})}{
    \seq(\desi{(\Rxn{*},\trans{C})},\trans{A}) &
    \seq(\ntrans{\Delta},\ntrans{B})
}}
\]
However, the premises are not intuitionistically polarizable, and therefore cannot be provable; in other words, these cases are impossible.

Most interestingly we have subexponentially licensed structural rules.
The positive formula in the sequent cannot be of the form $\nynot{i}A$, and is thus not part of the active substructure of any subexponential structural rules.  Hence, by the independent substructure lemma, we can make the following transformations, where $\desi{\Rx{*}}^r$ indicates reversing $\Rx{*}$, designating, and reversing back.
\[
\infer=[\sim]{\seq(\desi{\Rxn{\nynot{c}\ntrans{\Delta}}\Ex{}\Ex{*}},\trans{C})}{
\infer[\C]{\seq(\Rxn{\nynot{c}\ntrans{\Delta}}\Ex{}\Ex{\trans{C}})}{
\deduce{\seq(\Rxn{\nynot{c}\ntrans{\Delta}}\Ex{\nynot{c}\ntrans{\Delta}}\Ex{\trans{C}})}{
\seq(\desi{\Rxn{\nynot{c}\ntrans{\Delta}}\Ex{\nynot{c}\ntrans{\Delta}}\Ex{*}},\trans{C})\sim
}}}
\quad
\rwto
\quad
\infer[\C]{\desi{\Rx{\nbang{c}\Delta}\Ex{*}}^r\seq C}{
\desi{\Rx{\nbang{c}\Delta}\Ex{\nbang{c}\Delta}\Ex{*}}^r\seq C
}
\]
\end{proof}

We finish this section with two observations regarding some of our choices on rules and notation. First, it should be clear now the necessity of the top level structural rules in the system $\CNL$.  Since we expect completeness over the intuitionistic system, translations of provable sequents should themselves be provable. For example, $A\to B\seq A\to B$ translates corresponds to the one sided sequent $\seq (B^{\perp}\tensor A,A^{\perp}\parr B)$, whose proof requires top level exchange.
\[
\infer[\parr]{\seq (B^{\perp}\tensor A,A^{\perp}\parr B)}{
\infer[\E]{\seq (B^{\perp}\tensor A,(A^{\perp},B))}{
\infer[\tensor]{\seq ((A^{\perp},B),B^{\perp}\tensor A)}{
    \infer[\init]{\seq (A^{\perp},A)}{} &
    \infer[\init]{\seq (B,B^{\perp})}{}
}}}
\]
In non-associative systems, currying requires application of associativity.  Loosely, the deduction theorem is a top level currying, and this is captured by the admission of top level associativity in the classical system.  We see this behavior in the proof of the translation of $B\seq (A\to A\tensor B)$, i.e. $\seq (B^{\perp},A^{\perp}\parr(A\tensor B))$.  This cannot be proven without top level associativity.
\[
\infer[\parr]{\seq (B^{\perp},A^{\perp}\parr(A\tensor B))}{
\infer[\A2]{\seq (B^{\perp},(A^{\perp},A\tensor B))}{
\infer[\tensor]{\seq ((B^{\perp},A^{\perp}),A\tensor B)}{
    \infer[\init]{\seq(B^{\perp},B)}{} &
    \infer[\init]{\seq(A^{\perp},A)}{}
}}}
\]

Finally, we would like to note that in our classical system we follow the right-handed presentation $\seq \Gamma$, which is traditional in logic. It is also possible to consider the dual, left-handed presentation    $\Gamma^\perp \Rightarrow$, as in Buszkowski~\cite{DBLP:conf/lacl/Buszkowski16}, which is perhaps closer to the notation in type-logical, formal linguistics. An intuitionistic sequent $\Gamma \seq A$ would be translated into the left-handed classical system as $(A^\perp , \Gamma) \seq$. This translation is also conservative.

\subsection{Incompleteness with Associativity}
Our completeness excludes subexponentials licensing associativity.  To see why, consider the formula
\[
((a\tensor b)\tensor \nbang{a}c)\to(a\tensor (b\tensor \nbang{a}c))
\]
which encodes the converse to the rule $(\A2)$ of $\acLL$.  An exhaustive search finds that there is no cut-free proof of this in $\acLL$, but its translation has the following proof in $\CacLL$.

\[
\infer=[\parr]{\seq(\nynot{a}\bar{c}\parr(\bar{b}\parr\bar{a}))\parr(a\tensor (b\tensor \nbang{a}c)}{
\infer=[\A1,\E]{\seq((\nynot{a}\bar{c},(\bar{b},\bar{a})),a\tensor (b\tensor \nbang{a}c))}{
\infer[\nynot{}\A1]{\seq((\bar{b},\bar{a}),a\tensor (b\tensor \nbang{a}c),\nynot{a}\bar{c})}{
\infer=[\E,\A2,\A2]{\seq((\bar{b},(\bar{a},a\tensor (b\tensor \nbang{a}c))),\nynot{a}\bar{c})}{
\infer[\tensor]{\seq(((\nynot{a}\bar{c},\bar{b}),\bar{a}),a\tensor (b\tensor \nbang{a}c))}{
    \infer[\init]{\seq(\bar{a},a)}{} &
    \infer[\tensor]{\seq((\nynot{a}\bar{c},\bar{b}),b\tensor \nbang{a}c)}{
        \infer[\init]{\seq(\bar{b},b)}{} &
        \infer[\init]{\seq(\nynot{a}\bar{c},\nbang{a}c)}{}
    }
}}}}}
\]

\subsection{Extending $\acLL$}
We can recapture associativity by adding in more rules to the `intuitionistic' system $\acLL$.  While $\acLL$ has two subexponential labels for associativity and two subexponential rules for associativity, we consider an expanded system, still with two labels for associativity, but with six rules.

\begin{definition}
Let $\acLL^+$ be the logic containing all the rules of $\acLL$ except $(\A1)$ and $(\A2)$ with the addition of the following six subexponential associativity rules.

\[
\infer[\mathsf{A1L}]{\Rx{(\nbang{a1}\Delta_1,(\Delta_2,\Delta_3))}\seq G}{\Rx{((\nbang{a1}\Delta_1,\Delta_2),\Delta_3)}\seq G}
\qquad
\infer[\mathsf{A1M}]{\Rx{((\Delta_1,\nbang{a1}\Delta_2),\Delta_3)}\seq G}{\Rx{(\Delta_1,(\nbang{a1}\Delta_2,\Delta_3))}\seq G}
\qquad
\infer[\mathsf{A1R}]{\Rx{(\Delta_1,(\Delta_2,\nbang{a1}\Delta_3))}\seq G}{\Rx{((\Delta_1,\Delta_2),\nbang{a1}\Delta_3)}\seq G}
\]
\[
\infer[\mathsf{A2L}]{\Rx{((\nbang{a2}\Delta_1,\Delta_2),\Delta_3)}\seq G}{\Rx{(\nbang{a2}\Delta_1,(\Delta_2,\Delta_3))}\seq G}
\qquad
\infer[\mathsf{A2M}]{\Rx{(\Delta_1,(\nbang{a2}\Delta_2,\Delta_3))}\seq G}{\Rx{((\Delta_1,\nbang{a2}\Delta_2),\Delta_3)}\seq G}
\qquad
\infer[\mathsf{A2R}]{\Rx{((\Delta_1,\Delta_2),\nbang{a2}\Delta_3)}\seq G}{\Rx{(\Delta_1,(\Delta_2,\nbang{a2}\Delta_3))}\seq G}
\]
\end{definition}

Note that the rules $(\AOL)$ and $(\ATR)$ of $\acLL^+$ are exactly the rules $(\A1)$ and $(\A2)$ of $\acLL$, respectively, so $\acLL^+$ is a stronger system.

The proof of the next theorem is exactly as in the previous completeness theorem, with the addition of a case for the rules $(\A1)$ and $(\A2)$, and it is shown in Appendix~\ref{app:ass}.
\begin{theorem}[Completeness with Associativity]
Let $\Gamma\seq A$ be an $\acLL^+$ sequent (whose signature may include $\A1$ and $\A2$).  If $\seq(\ntrans{\Gamma},\trans{A})$ is provable in $\CacLL$, then $\Gamma\seq A$ is provable in $\acLL^+$.
\end{theorem}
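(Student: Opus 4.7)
The plan is to extend the nested induction on $\CacLL$ proof length used in the proof of the basic Completeness lemma by adding two new inductive cases, one for each of the subexponential associativity rules $\nynot{}\A1$ and $\nynot{}\A2$ of $\CacLL$. All base cases and all inductive cases for rules already treated in that proof carry over verbatim, since none of them interact with the new rule set of $\acLL^+$; only the two new cases require analysis, and I will focus on $\nynot{}\A1$, since $\nynot{}\A2$ is entirely symmetric.

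Suppose the last non-structural rule of the $\CacLL$ proof is $\nynot{}\A1$, so that (modulo designator) the conclusion has the form $\seq((\Delta_1,(\Delta_2,\Delta_3)),\nynot{a1}\Xi)$ with premise $\seq(((\Delta_1,\Delta_2),\Delta_3),\nynot{a1}\Xi)$. By the Intuitionistic Polarization Lemma the positive formula $\trans{A}$ occurs at a unique leaf, and since every leaf of $\nynot{a1}\Xi$ has the form $\nynot{i}\ntrans{F}$ (hence is negative), $\trans{A}$ must sit inside exactly one of $\Delta_1$, $\Delta_2$, or $\Delta_3$. I split into three subcases accordingly. In the subcase $\trans{A}\in\Delta_j$, I invoke Independent Substructure Preservation to write $\Delta_j\equiv\Rx{\trans{A}}$ for some context $\Rx{*}$; the other substructures $\Delta_i$ ($i\neq j$) together with $\nynot{a1}\Xi$ are purely negative and are thus inverse translations (up to the flip $\ntrans{(\Gamma,\Delta)}=(\ntrans{\Delta},\ntrans{\Gamma})$) of $\acLL^+$ substructures $\Sigma_i$ and $\nbang{a1}\Pi$.

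Next, exactly as in the contraction case of the basic proof, I apply the designator identities to rewrite premise and conclusion into the forms $\seq(\ntrans{\Theta'},\trans{A})$ and $\seq(\ntrans{\Theta},\trans{A})$, where $\Theta$ and $\Theta'$ are $\acLL^+$ antecedent structures differing only by a single local associativity move whose three participating substructures are $\nbang{a1}\Pi$, $\Sigma_k$, and $\Sigma_l$ (the two indices in $\{1,2,3\}\setminus\{j\}$). The induction hypothesis yields an $\acLL^+$ proof of $\Theta'\seq A$, and a single application of one of $\mathsf{A1L}$, $\mathsf{A1M}$, $\mathsf{A1R}$, determined by the value of $j$, produces the required $\acLL^+$ proof of $\Theta\seq A$. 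The treatment of $\nynot{}\A2$ is identical, invoking $\mathsf{A2L}$, $\mathsf{A2M}$, $\mathsf{A2R}$ in place of the $\mathsf{A1}$-rules.

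The principal obstacle is purely bookkeeping: one has to translate the top-level tree reshaping performed by $\nynot{}\A1$ (which moves material around the root while $\nynot{a1}\Xi$ is held at the designated position) into an in-place local associativity move in the intuitionistic antecedent, where $\nbang{a1}\Pi$ may occur at arbitrary depth. Verifying that each of the three positions of $\trans{A}$ (in $\Delta_1$, $\Delta_2$, or $\Delta_3$) maps to precisely one of $\mathsf{A1L}$, $\mathsf{A1M}$, $\mathsf{A1R}$ amounts to unfolding the designator definition in each subcase and checking the resulting bracketings; this is exactly the design criterion behind introducing six associativity rules in $\acLL^+$ rather than two. No new technical machinery beyond that of the basic Completeness lemma is required.
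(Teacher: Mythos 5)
Your proposal is correct and follows essentially the same route as the paper: the paper also extends the completeness induction with new cases for $\nynot{}\A1$ and $\nynot{}\A2$, splits on whether the positive formula lies in $\Delta_1$, $\Delta_2$, or $\Delta_3$ (it cannot occur in the $\nynot{a1}$-marked substructure), and closes each subcase with the induction hypothesis followed by exactly one of $\mathsf{A1L}$, $\mathsf{A1M}$, $\mathsf{A1R}$ (resp.\ the $\mathsf{A2}$-rules), using the designator and Independent Substructure Preservation machinery. The only difference is one of detail: the paper carries out the three designator computations explicitly to pin down which rule matches which position, a verification you defer as routine bookkeeping.
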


\subsection{Incompleteness with Additive Constants}
If we extend $\acLL$ with the $\zero$ constant, governed by the following rule, we lose completeness.

\[
\infer[\zero L]{\Rx{\zero}\seq C}{}
\]

Adapting a counterexample of \cite{DBLP:journals/mscs/KanovichKNS19}, who themselves adapt a counterexample of \cite{DBLP:journals/logcom/Schellinx91}, we consider the following sequent.

\[
\nbang{a}((r\la(0\to q))\la p),(s\la p)\to 0\seq r
\]

By exhaustive proof search, we find that this is not provable in the extended intuitionistic system.  However, the translation, with $\trans{0}:\equiv 0$, has the following proof in $\CacLL$.

\[
\infer[~]{\seq ((\top\tensor(s\parr p^{\perp}),\nynot{a}(p\tensor((\top\parr q)\tensor r^{\perp}))),r)}{
\infer[\tensor]{\seq((\nynot{a}(p\tensor((\top\parr q)\tensor r^{\perp})),r),\top\tensor(s\parr p^{\perp}))}{
    \infer[\top]{\seq\top}{} &
    \infer[\parr]{\seq((\nynot{a}(p\tensor((\top\parr q)\tensor r^{\perp})),r),s\parr p^{\perp})}{
    \infer[\sim]{\seq((\nynot{a}(p\tensor((\top\parr q)\tensor r^{\perp})),r),(s,p^{\perp}))}{
    \infer[\A1]{\seq((r,(s,p^{\perp})),\nynot{a}(p\tensor((\top\parr q)\tensor r^{\perp})))}{
    \infer[\der]{\seq(((r,s),p^{\perp}),\nynot{a}(p\tensor((\top\parr q)\tensor r^{\perp})))}{
    \infer[\tensor]{\seq(((r,s),p^{\perp}),p\tensor((\top\parr q)\tensor r^{\perp}))}{
        \infer[\tensor]{\seq((r,s),(\top\parr q)\tensor r^{\perp})}{
            \infer[\init]{\seq (r,r^{\perp})}{} &
            \infer[\parr]{\seq (s,\top\parr q)}{
                \infer[\top]{\seq (s,(\top,q))}{}
            }
        } &
        \infer[\init]{(p^{\perp},p)}{}
    }}}}}
}
}
\]

\section{Conclusion and future work}\label{sec:conc}
Regarding proof theoretic aspects of $\CacLL$, the circularity of both the structural rules over structures and subexponentials may cause meaningless steps in derivations.
The focusing  discipline~\cite{Andreoli:1992} is
determined by the alternation of {\em focused} and {\em unfocused} phases in the proof construction.
In the unfocused phase, inference rules can be applied eagerly and no
backtracking is necessary;
in the  focused phase, on the other hand, either context restrictions apply, or choices
within inference rules can lead to failures for which one may need to
backtrack. These phases are totally determined by the polarities of formulae:
provability is preserved when applying right/left rules for negative/positive formulae respectively, but not necessarily in other cases. 

In the near future, we  plan to propose a focused system for $\CacLL$. We start by observing that, since derivations will be considered {\em modulo} designator, the (circular) structural rules in Figure~\ref{fig:FNL} can be dropped. The only other point  of circularity comes from the subexponential structural rules. For the remaining linear logic connectives,  polarization and focusing is well understood. 
Weakening can be absorbed into the rules and it seems possible to restrict contraction to ``neutral'' structures such as proposed in~\cite{DBLP:conf/fossacs/GheorghiuM21}, meaning that it can be applied only before focusing on a formula. The only point of atention would be associativity, and this is under investigation at the moment. 

Also, our subexponential extension of the Lambek non-associative calculus $\NL$~\cite{Lambek1961OnTC} makes it possible to use local associativity, controlled by appropriate subexponentials, instead of the global associativity. The usefulness of this more fine-grained control of associativity is best seen in the linguistic examples considered in~\cite{DBLP:conf/cade/BlaisdellKKPS22} that involve both non-associativity and associativity, such as, ``The superhero whom Hawkeye killed was incredible''. 

A dual approach to combining associative and non-associative features is the (associative) Lambek calculus with brackets, developed by Morrill~\cite{Morrill1992,DBLP:journals/jolli/Morrill19} and Moortgat~\cite{Moortgat1996}. In that approach the underlying system is associative and bracket modalities control local non-associativity.  In the setting without subexponentials, Kurtonina~\cite{Kurtonina1995}  showed that the Lambek non-associative calculus $\NL$~\cite{Lambek1961OnTC} can be conservatively embedded in the calculus with brackets, which is a conservative extension of the Lambek associative calculus $\mathsf{L}$~\cite{Lambek1958} by construction. In the presence of subexponentials, the exact relationship between the two approaches remains to be investigated.

We end with some discussion about complexity.
Even when enriched by subexponentials, classical non-associative non-commutative linear logic remains conservative over intuitionistic non-associative non-commutative linear logic, for certain choices of rules or fragments of the language.

Conservativity allows one to port the undecidability result of \cite{DBLP:conf/cade/BlaisdellKKPS22} which adapts a result of Chvalovsk\'{y} \cite{Chvalovsky2015}.  This in particular would give undecidability of $\CacLL$ in full generality.

However, since strictly more is expressible in $\CacLL$ than $\acLL$, decidability results, as in Buszkowski's work \cite{DBLP:conf/lacl/Buszkowski16}, would be stronger and would have implications in the reverse direction.

Further, Buszkowski \cite{DBLP:conf/lacl/Buszkowski16} also shows that classical non-associative non-commutative multiplicative linear logic generates context-free grammars as a categorial grammar.  From the intuitionistic direction, subexponentials are useful tools for modeling certain linguistic phenomenal, 
including specifically associative subexponentials in non-associative systems \cite{DBLP:conf/cade/BlaisdellKKPS22}.  There is much left to explore regarding the application of subexponentials to linguistic analysis in classical systems.

\bibliography{biblio}

\begin{appendix}
\section{Proof of results in Section 2}\label{app:sec2}
{\bf Lemma 2.1.}
For any structure $\Tx{\Xi}$ with distinguished subtree, we have
\[
(\desi{\Tx{*}},\Xi)\sim\Tx{\Xi}.
\]

\begin{proof}
We prove this by induction on the depth of $\Xi$.  We consider $\Tx{}$ casewise.

If $\Tx{}\equiv (\Rx{},\Delta)$, then using the induction hypothesis we have

\begin{align*}
    (\desi{(\Rx{*},\Delta)},\Xi) &:\equiv 
    (\desi{(\Delta,\Rx{*})},\Xi) \\
    &\sim (\Delta,\Rx{\Xi}) \\
    &\sim (\Rx{\Xi},\Delta)
\end{align*}

Further, if $\Tx{}\equiv (\Gamma,(\Delta,\Px{}))$, then

\begin{align*}
    (\desi{(\Gamma,(\Delta,\Px{*}))},\Xi) &:\equiv 
    (\desi{((\Gamma,\Delta),\Px{*})},\Xi) \\
    &\sim ((\Gamma,\Delta),\Px{\Xi}) \\
    &\sim (\Gamma,(\Delta,\Px{\Xi}))
\end{align*}

Most interestingly, if $\Tx{}\equiv (\Gamma,(\Dx{},\Pi))$, then

\begin{align*}
    (\desi{(\Gamma,(\Dx{*},\Pi))},\Xi) &:\equiv 
    (\desi{((\Pi,\Gamma),\Dx{*})},\Xi) \\
    &\sim ((\Pi,\Gamma),\Dx{\Xi}) \\
    &\sim (\Pi,(\Gamma,\Dx{\Xi})) \\
    &\sim ((\Gamma,\Dx{\Xi}),\Pi) \\
    &\sim (\Gamma,(\Dx{\Xi},\Pi))
\end{align*}

Finally, as the base case, if $\Tx{*}:\equiv (\Gamma,*)$, then

\begin{align*}
    (\desi{(\Gamma,*)},\Xi) &:\equiv (\Gamma,\Xi)
\end{align*}

Also note that this holds in the empty case.
\end{proof}

\noindent
{\bf Lemma 2.2.}
If $\Tx{*}\sim\Xx{*}$, then $\desi{\Tx{*}}\equiv\desi{\Xx{*}}$.

\begin{proof}
It is sufficient to prove this when $\sim$ is a single forward step, as $\equiv$ is reflexive, symmetric, and transitive.  We consider all possible single structural steps casewise.

First, consider exchange, i.e. $\Tx{*}= (\Rx{*},\Delta)$ and $\Xx{*}:\equiv (\Delta,\Rx{*})$ or vice-versa.  Then,

\begin{align*}
    \desi{(\Rx{*},\Delta)} &:\equiv \desi{(\Delta,\Rx{*})}
\end{align*}

Associativity requires more cases.  We need to consider three places $*$ could appear.

First, if $*$ is on the left, i.e. $\Tx{*}=(\Rx{*},(\Delta,\Pi))$ and $\Xx{*}\equiv((\Rx{*},\Delta),\Pi)$ then

\begin{align*}
    \desi{(\Rx{*},(\Delta,\Pi))} &:\equiv
    \desi{((\Delta,\Pi),\Rx{*})} \\
    &\equiv:\desi{(\Pi,(\Rx{*},\Delta))} \\
    &\equiv:\desi{((\Rx{*},\Delta),\Pi)}
\end{align*}

If $*$ appears in the middle, then

\begin{align*}
    \desi{(\Gamma,(\Dx{*},\Pi))} &:\equiv
    \desi{((\Pi,\Gamma),\Dx{*})} \\
    &\equiv:\desi{(\Pi,(\Gamma,\Dx{*}))} \\
    &\equiv:\desi{((\Gamma,\Dx{*}),\Pi)}
\end{align*}

Finally, if $*$ appears rightmost, then

\begin{align*}
    \desi{(\Gamma,(\Delta,\Px{*}))} &:\equiv
    \desi{((\Gamma,\Delta),\Px{*})}
\end{align*}
\end{proof}

\noindent
{\bf Corollary 2.1.}
If $(\Gamma,\Pi)\sim(\Delta,\Pi)$ both contain a distinguished occurrence of $\Pi$, then $\Gamma\equiv\Delta$ (and thus $(\Gamma,\Pi)\equiv(\Delta,\Pi)$ as well).

\begin{proof}
Since the occurrence of $\Pi$ is distinguished, this tells us that $(\Gamma,*)\sim(\Delta,*)$.  Therefore, by the preceding lemma,

\[
\Gamma\equiv:\desi{(\Gamma,*)}\equiv\desi{(\Delta,*)}:\equiv\Delta,
\]

\noindent proving the claim.
\end{proof}

\section{Cut is admissible in $\CacLL$}\label{app:cut}

{\bf Theorem 2.1.}
If a sequent $\seq\Gamma$ is provable in $\CacLL+(\cut)$, then it is provable in $\CacLL$.

\begin{proof}
For the case of $(\cut)$, we consider case-wise the first non-structural rules above $(\cut)$.

We first consider if the principal formula of this rule on one of the two premises is not the cut formula.  By symmetry we consider this only on the left.

If this rule is $(\tensor)$, then we consider

\[
\infer[\cut]{\seq(\Rx{A\tensor B},\Delta)}{
\infer=[\sim]{\seq(\Rx{A\tensor B},C)}{
\seq(\desi{(\Rx{*},C)},A\tensor B)
} &
\seq(C^{\perp},\Delta)
}
\]

\noindent We set $(\Pi,\Xi):\equiv\desi{(\Rx{*},C)}$, and consider two subcases depending on which side contains $C$.  By the independent substructure preservation lemma we have the following for these cases.

\[
\infer[\cut]{\seq(\Rx{A\tensor B},\Delta)}{
\infer=[\sim]{\seq(\Rx{A\tensor B},C)}{
\deduce{\seq(\desi{(\Rx{*},C)},A\tensor B)}{
\infer[\tensor]{\seq((\Pi\{C\},\Xi),A\tensor B) \equiv}{
\seq(\Xi,A) & \seq(\Pi\{C\},B)
}}} &
\seq(C^{\perp},\Delta)
}
\quad
\rwto
\infer=[\sim]{\seq(\Rx{A\tensor B},\Delta)}{
\deduce{\seq(\desi{(\Rx{*},\Delta)},A\tensor B)}{
\infer[\tensor]{\seq((\Pi\{\Delta\},\Xi),A\tensor B)\equiv}{
    \seq(\Xi,A) &
    \infer=[\sim]{\seq(\Pi\{\Delta\},B)}{
    \infer[\cut]{\seq(\desi{(\Pi\{*\},B)},\Delta)}{
    \infer=[\sim]{\seq(\desi{(\Pi\{*\},B)},C)}{
        \seq(\Pi\{C\},B)
        } &
        \seq(C^{\perp},\Delta)
    }}
}}}
\]

\[
\infer[\cut]{\seq(\Rx{A\tensor B},\Delta)}{
\infer=[\sim]{\seq(\Rx{A\tensor B},C)}{
\deduce{\seq(\desi{(\Rx{*},C)},A\tensor B)}{
\infer[\tensor]{\seq((\Pi,\Xi\{C\}),A\tensor B) \equiv}{
\seq(\Xi\{C\},A) & \seq(\Pi,B)
}}} &
\seq(C^{\perp},\Delta)
}
\quad
\rwto
\infer=[\sim]{\seq(\Rx{A\tensor B},\Delta)}{
\deduce{\seq(\desi{(\Rx{*},\Delta)},A\tensor B)}{
\infer[\tensor]{\seq((\Pi,\Xi\{\Delta\}),A\tensor B)\equiv}{
    \infer=[\sim]{\seq(\Xi\{\Delta\},A)}{
    \infer[\cut]{\seq(\desi{(\Xi\{*\},A)},\Delta)}{
    \infer=[\sim]{\seq(\desi{(\Xi\{*\},A)},C)}{
        \seq(\Pi\{C\},A)
        } &
        \seq(C^{\perp},\Delta)
    }} &
    \seq (\Pi,B)
}}}
\]

More simply, if the first nonstructural rule is $(\with)$, then we have

\[
\infer[\cut]{\seq(\desi{\Rx{A\with B}\{*\}},\Delta)}{
    \infer=[\sim]{\seq(\desi{\Rx{A\with B}\{*\}},C)}{
    \infer[\with]{\seq\Rx{A\with B}\{C\}}{
        \seq\Rx{A}\{C\} &
        \seq\Rx{B}\{C\}
    }} &
    \seq (C^{\perp},\Delta)
}
\quad
\rwto
\]
\[
\infer[\with]{\seq(\desi{\Rx{A\with B}\{*\}},\Delta)}{
    \infer[\cut]{\seq(\desi{\Rx{A}\{*\}},\Delta)}{
        \infer=[\sim]{\seq(\desi{\Rx{A}\{*\}},C)}{
            \seq\Rx{A}\{C\}
        } &
        \seq (C^{\perp},\Delta)
    } &
    \infer[\cut]{\seq(\desi{\Rx{B}\{*\}},\Delta)}{
        \infer=[\sim]{\seq(\desi{\Rx{B}\{*\}},C)}{
            \seq\Rx{B}\{C\}
        } &
        \seq (C^{\perp},\Delta)
    }
}
\]

For the one-premise rules $(\parr),(\oplus_i),(\perp),(\der)$ and the subexponential structural rules (applied independently of the cut formula), we have

\[
\infer[\cut]{\seq(\desi{\Rx{\Pi}\{*\}},\Delta)}{
    \infer=[\sim]{\seq(\desi{\Rx{\Pi}\{*\}},C)}{
    \infer[\R]{\seq\Rx{\Pi}\{C\}}{
    \seq\Rx{\Pi'}\{C\}
    }} &
    \seq(C^{\perp},\Delta)
}
\quad
\rwto
\quad
\infer[\R]{\seq(\desi{\Rx{\Pi}\{*\}},\Delta)}{
\infer[\cut]{\seq(\desi{\Rx{\Pi'}\{*\}},\Delta)}{
    \infer=[\sim]{\seq(\desi{\Rx{\Pi'}\{*\}},C)}{
        \seq\Rx{\Pi'}\{C\}
    } &
    \seq(C^{\perp},\Delta)
}}
\]

For $(\top)$, the $(\cut)$ disappears.

\[
\infer[\cut]{\seq(\desi{\Rx{\top}\{*\}},\Delta)}{
    \infer=[\sim]{\seq(\desi{\Rx{\top}\{*\}},C)}{
    \infer[\top]{\seq\Rx{\top}\{C\}}{}
    } &
    \seq(C^{\perp},\Delta)
}
\quad
\rwto
\quad
\infer[\top]{\seq(\desi{\Rx{\top}\{*\}},\Delta)}{}
\]

The rules $(\one)$ and $(\prom)$ cannot be applied nonprincipally.

If $(\init)$ is applied above $(\cut)$, the $(\cut)$ disappears as usual.  Up to symmetry we have the following.

\[
\infer[\cut]{\seq (A^{\perp},\Delta)}{
    \infer=[\sim]{\seq(A^{\perp},A)}{
    \infer[\init]{\seq(A,A^{\perp})}{}
    } &
    \seq (A^{\perp},\Delta)
}
\quad
\rwto
\quad
\seq (A^{\perp},\Delta)
\]

Thus, all that remains to be checked is if the cut formula is principal on both sides.  So we consider casewise each dual pair of connectives.

Firstly, we consider the multiplicative binary connectives.

\[
\infer[\cut]{\seq(\desi{\Rx{*}},(\Delta,\Pi))}{
    \infer=[\sim]{\seq (\desi{\Rx{*}},A\parr B)}{
    \infer[\parr]{\seq\Rx{A\parr B}}{
    \seq\Rx{(A,B)}
    }} &
    \infer=[\sim]{\seq(B^{\perp}\tensor A^{\perp},(\Delta,\Pi))}{
    \infer[\tensor]{\seq((\Delta,\Pi),B^{\perp}\tensor A^{\perp})}{
        \seq(\Pi,B^{\perp}) &
        \seq(\Delta,A^{\perp})
    }}
}
\quad
\rwto
\]
\[
\infer[\A1]{\seq(\desi{\Rx{*}},(\Delta,\Pi))}{
\infer[\cut]{\seq((\desi{\Rx{*}},\Delta),\Pi)}{
    \infer=[\E,\A1]{\seq((\desi{\Rx{*}},\Delta),B}{
    \infer[\cut]{\seq ((B,\desi{\Rx{*}}),\Delta)}{
        \infer=[\E,\A1,\E]{\seq ((B,\desi{\Rx{*}}),A)}{
        \infer=[\sim]{\seq (\desi{\Rx{*}},(A,B))}{
        \seq\Rx{(A,B)}
        }} &
        \infer[\E]{\seq(A^{\perp},\Delta)}{
        \seq(\Delta,A^{\perp})
        }
    }} &
    \infer[\E]{\seq(B^{\perp},\Pi)}{
    \seq(\Pi,B^{\perp})
    }
}}
\]

Next, we have the additive binary connectives.

\[
\infer[\cut]{\seq(\desi{\Rx{*}},\desi{\Dx{*}})}{
    \infer=[\sim]{\seq(\desi{\Rx{*}},A_1\oplus A_2)}{
    \infer[\oplus_i]{\seq\Rx{A_1\oplus A_2}}{
    \seq\Rx{A_i}
    }} &
    \infer=[\sim]{\seq (A_1^{\perp}\with A_2^{\perp},\desi{\Dx{*}})}{
    \infer[\with]{\seq\Dx{A_1^{\perp}\with A_2^{\perp}}}{
        \seq\Dx{A_1^{\perp}} &
        \seq\Dx{A_2^{\perp}}
    }}
}
\quad
\rwto
\]
\[
\infer[\cut]{\seq(\desi{\Rx{*}},\desi{\Dx{*}})}{
    \infer=[\sim]{\seq(\desi{\Rx{*}},A_i)}{
    \seq\Rx{A_i}
    } &
    \infer=[\sim]{\seq(A_i^{\perp},\desi{\Dx{*}})}{
    \seq\Dx{A_i^{\perp}}
    }
}
\]

The multiplicative units are straightforward.

\[
\infer[\cut]{\seq\desi{\Rx{*}}}{
    \infer=[\sim]{\seq(\desi{\Rx{*}},\perp)}{
    \infer[\perp]{\seq\Rx{\perp}}{
    \seq\Rx{}
    }} &
    \infer=[\sim]{\seq\one}{
    \infer[\one]{\seq\one}{}
    }
}
\quad
\rwto
\quad
\infer=[\sim]{\seq\desi{\Rx{*}}}{
\seq\Rx{}
}
\]

There is no rule for $0$, so neither $0$ nor $\top$ can be the cut formula when both sides are principal.

Thus, all that remains is when the cut formula has a subexponential as its top level connective.  By symmetry, say that the cut formula in the left premise has a bang.  Thus, the first nonstructural rule on the left is $(\prom)$.  However, on the right we need to consider $(\prom),(\der)$ and the subexponential structural rules, so we consider these individually.

\[
\infer[\cut]{\seq(\Gamma,\desi{(\Dx{*},\nbang{j}B)})}{
    \infer=[\sim]{\seq(\Gamma,\nbang{i}A)}{
    \infer[\prom]{\seq(\Gamma,\nbang{i}A)}{
    \seq(\Gamma^{\upset{i}},A)
    }} &
    \infer=[\sim]{\seq(\nynot{i}A^{\perp},\desi{(\Dx{*},\nbang{j}B)})}{
    \infer[\prom]{\seq (\Dx{\nynot{i}A^{\perp}},\nbang{j}B)}{
    \seq (\Dx{\nynot{i}A^{\perp}}^{\upset{j}},B)
    }}
}
\quad
\rwto
\quad
\infer=[\sim]{\seq(\Gamma,\desi{(\Dx{*},\nbang{j}B)})}{
\infer[\prom]{\seq(\Dx{\Gamma},\nbang{j}B)}{
\infer=[\sim]{\seq(\Dx{\Gamma}^{\upset{j}},B)}{
\infer[\cut]{\seq(\Gamma,\desi{(\Dx{*},B)})}{
    \infer=[\sim]{\seq(\Gamma,\nbang{i}A)}{
    \infer[\prom]{\seq(\Gamma,\nbang{i}A)}{
    \seq(\Gamma^{\upset{i}},A)
    }} &
    \infer=[\sim]{\seq(\nynot{i}A^{\perp},\desi{(\Dx{*},B)})}{
    \seq (\Dx{\nynot{i}A^{\perp}}^{\upset{j}},B)
    }
}}}}
\]

\noindent where $j\succeq i$.

\[
\infer[\cut]{\seq (\Gamma,\desi{\Dx{*}})}{
    \infer=[\sim]{\seq(\Gamma,\nbang{i}A)}{
        \infer[\prom]{\seq(\Gamma,\nbang{i}A)}{
        \seq(\Gamma^{\upset{i}},A)
    }} &
    \infer=[\sim]{\seq(\nynot{i}A^{\perp},\desi{\Dx{*}})}{
    \infer[\der]{\seq\Dx{\nynot{A^{\perp}}}}{
    \seq\Dx{A^{\perp}}
    }}
}
\quad
\rwto
\quad
\infer[\cut]{\seq (\Gamma,\desi{\Dx{*}})}{
    \seq(\Gamma,A) &
    \infer=[\sim]{\seq(A^{\perp},\desi{\Dx{*}})}{
    \seq\Dx{A^{\perp}}
    }
}
\]

For the subexponential structural rules, note that since $\Gamma^{\upset{i}}$ is wrapped in subexponentials with labels at least $i$, if $\nynot{i}$ licenses a structural rule, so do all $\nynot{k}$ appearing in $\Gamma$.

Therefore, the following reductions for weakening, exchange, and associativity reduce the depth while maintaining the cut formula's complexity.

\[
\infer[\cut]{\seq (\Gamma,\desi{\Dx{\nynot{w}\Px{*}}})}{
    \infer=[\sim]{\seq(\Gamma,\nbang{i}A)}{
        \infer[\prom]{\seq(\Gamma,\nbang{i}A)}{
        \seq(\Gamma^{\upset{i}},A)
    }} &
    \infer=[\sim]{\seq(\nynot{i}A^{\perp},\nynot{w}\Px{*})}{
    \infer[\W]{\seq\Dx{\nynot{w}\Px{\nynot{i}A^{\perp}}}}{
    \seq\Dx{}
    }}
}
\quad
\rwto
\quad
\infer=[\sim]{\seq (\Gamma,\desi{\Dx{\nynot{w}\Px{*}}})}{
\infer[\W]{\seq\Dx{\nynot{w}\Px{\Gamma}}}{
\seq\Dx{}
}}
\]

\[
\infer[\cut]{\seq (\Gamma,\desi{\Dx{(\nynot{e}\Px{*},\Xi)})}}{
    \infer=[\sim]{\seq(\Gamma,\nbang{i}A)}{
    \infer[\prom]{\seq(\Gamma,\nbang{i}A)}{
    \seq(\Gamma^{\upset{i}},A)
    }} &
    \infer=[\sim]{\seq (\nynot{i}A^{\perp},\desi{\Dx{(\nynot{e}\Px{*},\Xi)}})}{
    \infer[\E2]{\seq \Dx{(\nynot{e}\Px{\nynot{i}A^{\perp}},\Xi)}}{
    \seq \Dx{(\Xi,\nynot{e}\Px{\nynot{i}A^{\perp}})}
    }
    }
}
\]

In the case of contraction, $(\cut)$ becomes $(\mix)$.

\[
\infer[\cut]{\seq (\Gamma,\desi{\Dx{\nynot{c}\Px{*}}\Ex{}})}{
    \infer=[\sim]{\seq(\Gamma,\nbang{i}A)}{
        \infer[\prom]{\seq(\Gamma,\nbang{i}A)}{
        \seq(\Gamma^{\upset{i}},A)
    }} &
    \infer=[\sim]{\seq(\nynot{i}A^{\perp},\nynot{c}\Px{*})}{
    \infer[\C]{\seq\Dx{\nynot{c}\Px{\nynot{i}A^{\perp}}}\Ex{}}{
    \seq\Dx{\nynot{c}\Px{\nynot{i}A^{\perp}}}\Ex{\nynot{c}\Px{\nynot{i}A^{\perp}}}
    }}
}
\quad
\rwto
\quad
\infer=[\C]{\seq (\Gamma,\desi{\Dx{\nynot{c}\Px{*}}\Ex{}})}{
\infer[\mix]{\seq (\Gamma,\desi{\Dx{\nynot{c}\Px{*}}\Ex{\nynot{c}\Px{}}})}{
    \infer[\prom]{\seq(\Gamma,\nbang{i}A)}{
        \seq(\Gamma^{\upset{i}},A)
    } &
    \seq\Dx{\nynot{c}\Px{\nynot{i}A^{\perp}}}\Ex{\nynot{c}\Px{\nynot{i}A^{\perp}}}
}
}
\]

The case of $(\mix)$ is much the same.
\end{proof}

\section{Proof of soundness}\label{app:sound}

{\bf Lemma 3.1.}
If an $\acLL$ sequent $\Gamma\seq A$ is provable, then $\seq (\ntrans{\Gamma},\trans{A})$ is provable in $\CacLL$.
\begin{proof}
We prove this directly be induction on proofs by showing that the translations of each $\acLL$ rule is a valid $\CacLL$ partial proof.  Consider the bottom rule of a proof. We will show some key cases, the others are in the Appendix{app:sound}.

\[
\infer[\init]{A\seq A}{}
\qquad
\rwto
\qquad
\infer[\init]{\seq(\ntrans{A},\trans{A})}{}
\]

\[
\infer[\tensor L]
{\Rx{A\tensor B}\seq C}
{\Rx{(A,B)}\seq C}
\qquad
\rwto
\qquad
\infer[\parr]
{\seq (\Rxn{\ntrans{B}\parr \ntrans{A}},C)}
{\seq (\Rxn{(\ntrans{B},\ntrans{A})},C)}
\]

\[
\infer[\tensor R]
{(\Gamma,\Delta)\seq A\tensor B}
{\Gamma\seq A & \Delta\seq B}
\qquad
\rwto
\qquad
\infer[\tensor]
{\seq ((\ntrans{\Delta},\ntrans{\Gamma}),A\tensor B)}
{\seq (\ntrans{\Gamma},A) & \seq (\ntrans{\Delta},B)}
\]

\[
\infer[\to L]
{\Rx{(\Delta,A\to B)}\seq C}
{\Delta\seq A & \Rx{B}\seq C}
\qquad
\rwto
\qquad
\infer=[\sim]{\seq(\Rxn{(\ntrans{B}\tensor\trans{A},\ntrans{\Delta})},C)}{
\infer=[\A1,\E,\A1]{\seq(\desi{(\Rxn{*},C)},(\ntrans{B}\tensor\trans{A},\ntrans{\Delta}))}{
\infer[\tensor]{\seq((\ntrans{\Delta},\desi{(\Rxn{*},C)}),\ntrans{B}\tensor\trans{A})}{
    \seq(\ntrans{\Delta},\trans{A}) &
    \infer=[\sim]{\seq (\desi{(\Rxn{*},C)},\ntrans{B})}
    {\seq (\Rxn{\ntrans{B}},C)}
}}}
\]

\[
\infer[\to R]{\Gamma\seq A\to B}
{(A,\Gamma)\seq B}
\qquad
\rwto
\qquad
\infer[\parr]{\seq(\ntrans{\Gamma},\ntrans{A}\parr\trans{B})}{
\infer[\A1]{\seq(\ntrans{\Gamma},(\ntrans{A},\trans{B}))}{
\seq((\ntrans{\Gamma},\ntrans{A}),\trans{B})
}
}
\]

\[
\infer[\la L]
{\Rx{(B\la A,\Delta)}\seq C}
{\Delta\seq A & \Rx{B}\seq C}
\qquad
\rwto
\qquad
\infer=[\sim]{\seq (\Rxn{(\ntrans{\Delta},\trans{A}\tensor\ntrans{B})},C)}{
\infer[\A1]{\seq (\desi{(\Rxn{*},C)},(\ntrans{\Delta},\trans{A}\tensor\ntrans{B}))}{
\infer[\tensor]{\seq ((\desi{(\Rxn{*},C)},\ntrans{\Delta}),\trans{A}\tensor\ntrans{B})}{
    \seq(\ntrans{\Delta},\trans{A}) &
    \infer=[\sim]{\seq(\desi{(\Rxn{*},C)},\ntrans{B})}{
    \seq(\Rxn{\ntrans{B}},C)
    }
}}}
\]

\[
\infer[\la R]{\Gamma\seq B\la A}
{(\Gamma,A)\seq B}
\qquad
\rwto
\qquad
\infer[\parr]{\seq(\ntrans{\Gamma},\trans{B}\parr\ntrans{A})}{
\infer=[\A1,\E,\A1]{\seq(\ntrans{\Gamma},(\trans{B},\ntrans{A}))}{
\seq((\ntrans{\Gamma},\ntrans{A}),\trans{B})
}
}
\]

\[
\infer[\oplus L]{\Rx{A\oplus B}\seq C}{
    \Rx{A}\seq C &
    \Rx{B}\seq C
}
\qquad
\rwto
\qquad
\infer[\with]{\seq(\Rxn{\ntrans{A}\with \ntrans{B}},C)}{
    \seq(\Rxn{\ntrans{A}},C) &
    \seq(\Rxn{\ntrans{B}},C)
}
\]

\[
\infer[\oplus R_i]{\Gamma\seq A_1\oplus A_2}
{\Gamma\seq A_i}
\qquad
\rwto
\qquad
\infer[\oplus_i]{(\ntrans{\Gamma},\trans{A_1}\oplus \trans{A_2})}
{(\ntrans{\Gamma},\trans{A_i})}
\]

\[
\infer[\with L_i]{\Rx{A_1\with A_2}\seq C}
{\Rx{A_i}\seq C}
\qquad
\rwto
\qquad
\infer[\oplus_i]{\seq (\Rxn{\ntrans{A_1}\with\ntrans{A_2}},C)}
{\seq (\Rxn{\ntrans{A_i}},C)}
\]

\[
\infer[\with R]{\Gamma\seq A\with B}{
    \Gamma\seq A &
    \Gamma\seq B
}
\qquad
\rwto
\qquad
\infer[\with]{\seq (\ntrans{\Gamma},\trans{A}\with\trans{B})}{
    \seq (\ntrans{\Gamma},\trans{A}) &
    \seq (\ntrans{\Gamma},\trans{B})
}
\]

\[
\infer[\one L]{\Rx{\one}\seq C}
{\Rx{}\seq C}
\qquad
\rwto
\qquad
\infer[\perp]{\seq(\Rxn{\perp},C)}{\seq(\Rxn{},C)}
\]

\[
\infer[\one R]{\seq\one}{}
\qquad
\rwto
\qquad
\infer[\one]{\seq\one}{}
\]

\[
\infer[\top R]{\Gamma\seq\top}{}
\qquad
\rwto
\qquad
\infer[\top]{(\ntrans{\Gamma},\top)}{}
\]

\[
\infer[\prom]{\Gamma\seq\nbang{i}A}{
\Gamma^{\upset{i}}\seq A
}
\qquad
\rwto
\qquad
\infer[\prom]{\seq(\ntrans{\Gamma},\nbang{i}\trans{A})}{
\seq((\ntrans{\Gamma})^{\upset{i}},\trans{A})
}
\]

\[
\infer[\der]{\Rx{\nbang{i}A}\seq C}{
\Rx{A}\seq C
}
\qquad
\rwto
\qquad
\infer[\der]{\seq(\Rxn{\nynot{i}\ntrans{A}},\trans{C})}{
\seq(\Rxn{\ntrans{A}},\trans{C})
}
\]

Slightly more interestingly, we have the following translations of the subexponentially licensed structural rules.

\[
\infer[\W]{\Rx{\nbang{w}\Delta}\seq C}{
\Rx{}\seq C
}
\qquad
\rwto
\qquad
\infer[\W]{\seq(\Rxn{\nynot{w}\ntrans{\Delta}},\trans{C})}{
\seq(\Rxn{},\trans{C})
}
\]

\[
\infer[\C]{\Rx{\nbang{c}\Delta}\Ex{}\seq \trans{C}}{
\Rx{\nbang{c}\Delta}\Ex{\nbang{c}\Delta}\seq \trans{C}
}
\qquad
\rwto
\qquad
\infer[\C]{\seq(\Rxn{\nynot{c}\ntrans{\Delta}}\Ex{},\trans{C})}{
\seq(\Rx{\nynot{c}\ntrans{\Delta}}\Ex{\nynot{c}\ntrans{\Delta}},\trans{C})
}
\]

\[
\infer[\E1]{\Rx{(\nbang{e}\Delta,\Pi)}\seq C}{
\Rx{(\Pi,\nbang{e}\Delta)}\seq C
}
\qquad
\rwto
\qquad
\infer[\sim]{\seq(\Rxn{(\ntrans{\Pi},\nynot{e}\ntrans{\Delta})},\trans{C})}{
\infer[\A2]{\seq(\desi{(\Rxn{*},\trans{C})},(\ntrans{\Pi},\nynot{e}\ntrans{\Delta}))}{
\infer[\nynot{}\E]{\seq((\desi{(\Rxn{*},\trans{C})},\ntrans{\Pi}),\nynot{e}\ntrans{\Delta})}{
\infer=[\A1,\E,\A1]{\seq((\ntrans{\Pi},\desi{(\Rxn{*},\trans{C})}),\nynot{e}\ntrans{\Delta})}{
\infer=[\sim]{\seq(\desi{(\Rxn{*},\trans{C})},(\nynot{e}\ntrans{\Delta},\ntrans{\Pi}))}{
\seq(\Rxn{(\nynot{e}\ntrans{\Delta},\ntrans{\Pi})},\trans{C})
}}}}}
\]

\[
\infer[\E2]{\Rx{(\Pi,\nbang{e}\Delta)}\seq C}{
\Rx{(\nbang{e}\Delta,\Pi)}\seq C
}
\qquad
\rwto
\qquad
\infer=[\sim]{\seq(\Rxn{(\nynot{e}\ntrans{\Delta},\ntrans{\Pi})},\trans{C})}{
\infer=[\A2,\E,\A2]{\seq(\desi{(\Rxn{*},\trans{C})},(\nynot{e}\ntrans{\Delta},\ntrans{\Pi}))}{
\infer[\nynot{}\E]{\seq((\ntrans{\Pi},\desi{(\Rxn{*},\trans{C})}),\nynot{e}\ntrans{\Delta})}{
\infer[\A1]{\seq((\desi{(\Rxn{*},\trans{C})},\ntrans{\Pi}),\nynot{e}\ntrans{\Delta})}{
\infer=[\sim]{\seq(\desi{(\Rxn{*},\trans{C})},(\ntrans{\Pi},\nynot{e}\ntrans{\Delta}))}{
\seq(\Rxn{(\ntrans{\Pi},\nynot{e}\ntrans{\Delta})},\trans{C})
}}}}}
\]

\[
\infer[\mathsf{A}1]{\Rx{(\nbang{a1}\Delta_1,(\Delta_2,\Delta_3))}\seq C}{\Rx{((\nbang{a}\Delta_1,\Delta_2),\Delta_3)}\seq C}
\qquad
\rwto
\qquad
\infer=[\sim]{\seq(\Rxn{((\Delta_3,\Delta_2),\nbang{a1}\Delta_1)},\trans{C})}{
\infer[\A2]{\seq(\desi{(\Rxn{*},\trans{C})},((\Delta_3,\Delta_2),\nbang{a1}\Delta_1))}{
\infer[\nynot{}\A1]{\seq((\desi{(\Rxn{*},\trans{C})},(\Delta_3,\Delta_2)),\nbang{a1}\Delta_1)}{
\infer=[\A1,\A1]{\seq(((\desi{(\Rxn{*},\trans{C})},\Delta_3),\Delta_2),\nbang{a1}\Delta_1)}{
\infer=[\sim]{\seq(\desi{(\Rxn{*},\trans{C})},(\Delta_3,(\Delta_2,\nbang{a1}\Delta_1)))}{
\seq(\Rxn{(\Delta_3,(\Delta_2,\nbang{a1}\Delta_1))},\trans{C})
}}}}}
\]

\[
\infer[\mathsf{A}2]{\Rx{((\Delta_1,\Delta_2),\nbang{a2}\Delta_3)}\seq C}{\Rx{(\Delta_1,(\Delta_2,\nbang{a}\Delta_3))}\seq C}
\qquad
\rwto
\qquad
\infer=[\sim]{\seq(\Rxn{(\nbang{a2}\Delta_3,(\Delta_2,\Delta_1))},\trans{C})}{
\infer=[\E,\A1,\E]{\seq(\desi{(\Rxn{*},\trans{C})},(\nbang{a2}\Delta_3,(\Delta_2,\Delta_1)))}{
\infer[\nynot{}\A2]{\seq(((\Delta_2,\Delta_1),\desi{(\Rxn{*},\trans{C})}),\nbang{a2}\Delta_3)}{
\infer=[\E,\A2,\A2,\E]{\seq((\Delta_2,(\Delta_1,\desi{(\Rxn{*},\trans{C})})),\nbang{a2}\Delta_3)}{
\infer=[\sim]{\seq(((\nbang{a2}\Delta_3,\Delta_2),\Delta_1),\desi{(\Rxn{*},\trans{C})})}{
\seq(\Rxn{((\nbang{a2}\Delta_3,\Delta_2),\Delta_1)},\trans{C})
}}}}}
\]

Finally, we consider the translation of $(\cut)$.

\[
\infer[\cut]{\Rx{\Delta}\seq C}{\Delta\seq A & \Rx{A}\seq C}
\qquad
\rwto
\qquad
\infer=[\sim]{\seq (\Rxn{\ntrans{\Delta}},\trans{C})}{
\infer[\E]{\seq(\desi{(\Rxn{*},\trans{C})},\ntrans{\Delta})}{
\infer[\cut]{\seq(\ntrans{\Delta},\desi{(\Rxn{*},\trans{C})})}{
    \seq (\ntrans{\Delta},\trans{A}) &
    \infer[\E]{\seq(\ntrans{A},\desi{(\Rxn{*},\trans{C})})}{
    \infer=[\sim]{\seq(\desi{(\Rxn{*},\trans{C})},\ntrans{A})}{
    \seq(\Rxn{\ntrans{A}},\trans{C})
    }}
}}}
\]

\end{proof}

\section{Proof of completeness}\label{app:comp}

{\bf Lemma 3.5.}
Let $\Sigma$ be a subexponential signature where all labels $i$ have $f(i)\subseteq \{\nynot{}\C,\nynot{}\W,\nynot{}\E\}$ and let $\Gamma\seq A$ be an $\acLL$ sequent.  If $\seq (\ntrans{\Gamma},\trans{A})$ is provable in $\CacLL$, then $\Gamma\seq A$ is provable in $\acLL$.
\begin{proof}
We prove the theorem by induction on the length of $\CacLL$ proofs.  We consider casewise the first nonstructural rule.

If the first nonstructural rule is $(\tensor)$, we must consider the following subcases.

\[
\infer=[\sim]{\seq ((\ntrans{\Delta},\ntrans{\Gamma}),\trans{A}\otimes\trans{B})}{
\infer[\otimes]{\seq ((\ntrans{\Delta},\ntrans{\Gamma}),\trans{A}\otimes\trans{B})}{
\seq (\ntrans{\Gamma},\trans{A}) &
\seq (\ntrans{\Delta},\trans{B})
}}
\quad
\rwto
\quad
\infer[\otimes R]{\Gamma,\Delta\seq A\tensor B}{
\Gamma\seq A &
\Delta\seq B
}
\]

\[
\infer=[\sim]{\seq(\Rxn{\ntrans{\Delta},\trans{A}\otimes\ntrans{B}},\trans{C})}{
\infer[\otimes]{\seq((\desi{(\Rxn{*},\trans{C})},\ntrans{\Delta}),\trans{A}\otimes\ntrans{B})}{
\seq(\ntrans{\Delta},\trans{A}) &
\deduce{\seq(\desi{(\Rxn{*},\trans{C})},\ntrans{B})}{
    \seq (\Rxn{\ntrans{B}},\trans{C}) \sim
}
}}
\quad
\rwto
\quad
\infer[\la L]{\Rx{B\la A,\Delta}\seq C}{
    \Delta\seq A &
    \Rx{B}\seq C
}
\]

\[
\infer=[\sim]{\seq(\Rxn{\ntrans{B}\otimes\trans{A},\ntrans{\Delta}},\trans{C})}{
\infer[\otimes]{\seq((\ntrans{\Delta},\desi{(\Rxn{*},\trans{C})}),\ntrans{B}\otimes\trans{A})}{
\deduce{\seq(\desi{(\Rxn{*},\trans{C})},\ntrans{B})}{
    \seq (\Rxn{\ntrans{B}},\trans{C}) \sim
} &
\seq(\ntrans{\Delta},\trans{A})
}}
\quad
\rwto
\quad
\infer[\la L]{\Rx{\Delta,A\to B}\seq C}{
    \Delta\seq A &
    \Rx{B}\seq C
}
\]

There are two remaining ways that $\tensor$ can appear in the translation of a sequent and be principal.

\[
\infer=[\sim]{\seq(\Rxn{\ntrans{\Delta},\ntrans{B}\otimes\trans{A}},\trans{C})}{
\infer[\otimes]{\seq((\desi{(\Rxn{*},\trans{C})},\ntrans{\Delta}),\ntrans{B}\otimes\trans{A})}{
    \seq(\ntrans{\Delta},\ntrans{B}) &
    \seq(\desi{(\Rxn{*},\trans{C})},\trans{A})
}}
\qquad
\infer=[\sim]{\seq(\Rxn{\trans{A}\otimes\ntrans{B},\ntrans{\Delta}},\trans{C})}{
\infer[\otimes]{\seq((\ntrans{\Delta},\desi{(\Rxn{*},\trans{C})}),\trans{A}\otimes\ntrans{B})}{
    \seq(\desi{(\Rxn{*},\trans{C})},\trans{A}) &
    \seq(\ntrans{\Delta},\ntrans{B})
}}
\]

\noindent However, the premises are not intuitionistically polarizable, and therefore cannot be provable; in other words, these cases are impossible.

Now consider if the first nonstructural rule is $(\parr)$.  Note that the rule $(\parr)$ commutes with the structural rules, so without loss of generality we have the case

\[
\infer[\parr]{\seq(\ntrans{\Gamma},\ntrans{A}\parr\trans{B})}{
\deduce{\seq(\ntrans{\Gamma},(\ntrans{A},\trans{B}))}{
\seq((\ntrans{\Gamma},\ntrans{A}),\trans{B}) \sim
}}
\quad
\rwto
\quad
\infer[\to R]{\Gamma\seq A\to B}{
A,\Gamma\seq B
}
\]

\[
\infer[\parr]{\seq (\ntrans{\Gamma},\trans{B}\parr\ntrans{A})}{
\deduce{\seq(\ntrans{A},(\ntrans{\Gamma},\trans{B}))}{
\seq((\ntrans{A},\ntrans{\Gamma}),\trans{B}) \sim
}}
\quad
\rwto
\quad
\infer[\la R]{\Gamma\seq B\la A}{
\Gamma,A\seq B
}
\]

\[
\infer[\parr]{\seq (\Rxn{\ntrans{A}\parr\ntrans{B}},\trans{C})}{
\seq(\Rxn{(\ntrans{A},\ntrans{B})},\trans{C})
}
\quad
\rwto
\quad
\infer[\otimes L]{\Rx{B\otimes A}\seq C}
{\Rx{(B,A)}\seq C}
\]

\[
\infer[\with]{\seq(\ntrans{\Gamma},\trans{A}\with\trans{B})}{
    \seq(\ntrans{\Gamma},\trans{A}) &
    \seq(\ntrans{\Gamma},\trans{B})
}
\quad
\rwto
\quad
\infer[\with]{\Gamma\seq A\with B}{
    \Gamma\seq A &
    \Gamma\seq B
}
\]

If the last nonstructural rule is a subexponential rule, we either have $(\der)$ or $(\prom)$.  Note that $\nbang{i}A$ is not of the form $\ntrans{C}$ and $\nynot{i}A$ is not of the form $\trans{C}$.  Further, $(\der)$ commutes with the structural rules, so in that case we have

\[
\infer[\der]{\seq(\Rxn{\nynot{i}\ntrans{A}},\trans{C})}{
\seq(\Rxn{\ntrans{A}},\trans{C})
}
\quad
\rwto
\quad
\infer[\der]{\Rx{\nbang{i}A}\seq C}{
\Rx{A}\seq C
}
\]

In the case of $(\prom)$ we must have

\[
\infer=[\sim]{\seq(\ntrans{\Gamma},\nbang{i}\trans{A})}{
\infer[\prom]{\seq(\ntrans{\Gamma},\nbang{i}\trans{A})}{
\seq((\ntrans{\Gamma})^{\upset{i}},\trans{A})
}}
\quad
\rwto
\quad
\infer[\prom]{\Gamma\seq \nbang{i}A}{
\Gamma^{\upset{i}}\seq A
}
\]

Most interestingly we have subexponentially licensed structural rules, especially exchange.  The positive formula in the sequent cannot be of the form $\nynot{i}A$, and is thus not part of the active substructure of any subexponential structural rules.  Hence, by the independent substructure lemma, we can make the following transformations, where $\desi{\Rx{*}}^r$ indicates reversing $\Rx{*}$, designating, and reversing back.

\[
\infer=[\sim]{\seq(\desi{\Rxn{\nynot{w}\ntrans{\Delta}}\{*\}},\trans{C})}{
\infer[\W]{\seq(\Rxn{\nynot{w}\ntrans{\Delta}}\{\trans{C}\})}{
\deduce{\seq(\Rxn{}\{\trans{C}\})}{
\seq(\desi{\Rxn{}\{*\}},\trans{C})\sim
}}}
\quad
\rwto
\quad
\infer[\W]{\desi{\Rx{\nbang{w}\Delta}\{*\}}^r\seq C}{
\desi{\Rx{}\{*\}}^r\seq C
}
\]

\[
\infer=[\sim]{\seq(\desi{\Rxn{\nynot{c}\ntrans{\Delta}}\Ex{}\Ex{*}},\trans{C})}{
\infer[\C]{\seq(\Rxn{\nynot{c}\ntrans{\Delta}}\Ex{}\Ex{\trans{C}})}{
\deduce{\seq(\Rxn{\nynot{c}\ntrans{\Delta}}\Ex{\nynot{c}\ntrans{\Delta}}\Ex{\trans{C}})}{
\seq(\desi{\Rxn{\nynot{c}\ntrans{\Delta}}\Ex{\nynot{c}\ntrans{\Delta}}\Ex{*}},\trans{C})\sim
}}}
\quad
\rwto
\quad
\infer[\C]{\desi{\Rx{\nbang{c}\Delta}\Ex{*}}^r\seq C}{
\desi{\Rx{\nbang{c}\Delta}\Ex{\nbang{c}\Delta}\Ex{*}}^r\seq C
}
\]

For exchange, we need to consider more cases.  If the formula the substructure is commuting with does not contain the positive formula, then the transformation is exactly as above, as in this one of the two symmetric cases:

\[
\infer=[\sim]{\seq(\desi{\Rxn{\nynot{e}\ntrans{\Delta},\ntrans{\Pi}}\Ex{*}},\trans{C})}{
\infer[\E1]{\seq(\Rxn{\nynot{e}\ntrans{\Delta},\ntrans{\Pi}}\Ex{\trans{C}})}{
\deduce{\seq(\Rxn{\ntrans{\Pi},\nynot{e}\ntrans{\Delta}}\Ex{\trans{C}})}{
\seq(\desi{\Rxn{\ntrans{\Pi},\nynot{e}\ntrans{\Delta}}\Ex{*}},\trans{C})\sim
}}}
\quad
\rwto
\quad
\infer[\E1]{\desi{\Rx{\Pi,\nbang{e}\Delta}\Ex{*}}^r\seq C}{
\desi{\Rx{\nbang{e}\Delta,\Pi}\Ex{*}}^r\seq C
}
\]

However, if the commuting substructure does contain the positive formula, then we have two cases for $(\E1)$ and $(\E2)$ respectively.  First calculate the following structural equalities with nested designators.

\begin{align*}
    \Rxn{(\nynot{e}\ntrans{\Delta},\Pxn{\trans{C}})} &\sim
    (\desi{\Rxn{*}},(\nynot{e}\ntrans{\Delta},\Pxn{\trans{C}}))\\
    &\sim ((\desi{\Rxn{*}},\nynot{e}\ntrans{\Delta}),\Pxn{\trans{C}})\\
    &\sim (\desi{((\desi{\Rxn{*}},\nynot{e}\ntrans{\Delta}),\Pxn{*})},\trans{C})
\end{align*}

\begin{align*}
    \Rxn{\Pxn{(\trans{C}},\nynot{e}\ntrans{\Delta})} &\sim
    (\desi{\Rxn{*}},(\Pxn{\trans{C}},\nynot{e}\ntrans{\Delta}))\\
    &\sim ((\desi{\Rxn{*}},\Pxn{\trans{C}}),\nynot{e}\ntrans{\Delta})\\
    &\sim (\nynot{e}\ntrans{\Delta},(\desi{\Rxn{*}},\Pxn{\trans{C}}))\\
    &\sim ((\nynot{e}\ntrans{\Delta},\desi{\Rxn{*}}),\Pxn{\trans{C}})\\
    &\sim (\desi{((\nynot{e}\ntrans{\Delta},\desi{\Rxn{*}}),\Pxn{*})},\trans{C})
\end{align*}

\noindent These allow us to make the following transformations.

\[
\infer=[\sim]{\seq(\desi{((\desi{\Rxn{*}},\nynot{e}\ntrans{\Delta}),\Pxn{*})},\trans{C})}{
\infer[\E1]{\seq\Rxn{(\nynot{e}\ntrans{\Delta},\Pxn{\trans{C}})}}{
\deduce{\seq\Rxn{(\Pxn{\trans{C}},\nynot{e}\ntrans{\Delta})}}{
\seq(\desi{((\nynot{e}\ntrans{\Delta},\desi{\Rxn{*}}),\Pxn{*})},\trans{C}) \sim
}}}
\quad
\rwto
\quad
\infer[\E2]{\desi{\Px{*},(\nbang{e}\Delta,\desi{\Rx{*}}^r)}^r\seq C}{
\desi{\Px{*},(\desi{\Rx{*}}^r,\nbang{e}\Delta)}^r\seq C
}
\]

\[
\infer=[\sim]{\seq(\desi{((\nynot{e}\ntrans{\Delta},\desi{\Rxn{*}}),\Pxn{*})},\trans{C})}{
\infer[\E2]{\seq\Rxn{(\Pxn{\trans{C}},\nynot{e}\ntrans{\Delta})}}{
\deduce{\seq\Rxn{(\nynot{e}\ntrans{\Delta},\Pxn{\trans{C}})}}{
\seq(\desi{((\desi{\Rxn{*}},\nynot{e}\ntrans{\Delta}),\Pxn{*})},\trans{C}) \sim
}}}
\quad
\rwto
\quad
\infer[\E1]{\desi{\Px{*},(\desi{\Rx{*}}^r,\nbang{e}\Delta)}^r\seq C}{
\desi{\Px{*},(\nbang{e}\Delta,\desi{\Rx{*}}^r)}^r\seq C
}
\]

Note that the exchange rule used is different specifically in the case where the positive formula is being commuted; we remark further on this later.

We do not include labels licensing associativity, so we need not consider those rules, so the only remaining rules is $(\init)$.  Here, up to symmetry, we have the following simple translation.

\[
\infer=[\sim]{\seq(\ntrans{C},\trans{C})}{
\infer[\init]{\seq (\ntrans{C},\trans{C})}{}
}
\quad
\rwto
\quad
\infer[\init]{C\seq C}{}
\]

Thus all cases preserve provability, proving the claim.
\end{proof}

\section{Completeness with Associativity}\label{app:ass}

{\bf Theorem 3.2.}
Let $\Gamma\seq A$ be an $\acLL^+$ sequent (whose signature may include $\A1$ and $\A2$).  If $\seq(\ntrans{\Gamma},\trans{A})$ is provable in $\CacLL$, then $\Gamma\seq A$ is provable in $\acLL^+$.

\begin{proof}
The proof is exactly as in the previous completeness theorem, with the addition of a case for the rules $(\A1)$ and $(\A2)$.  
We proceed assuming all of the setup from that proof.

Since $\acLL^+$ does not include the connective $\nynot{i}$, we know that $\nynot{a1}$ is not the top-level connective of $\trans{C}$ in $\seq(\ntrans{\Gamma},\trans{C})$.

It is somewhat difficult to consider the operative cases here.  By the form of the conclusion of $(?\A1)$, we can deduce what the translation must have been.  We consider casewise where the positively translated formula appears in the conclusion of $(?\A1)$, specifically on the left, middle or right.

\[
\infer=[\sim]{
\seq (\desi{(((\ntrans{\Delta_1}\{*\},\ntrans{\Delta_2}),\ntrans{\Delta_3}),\nynot{a1}\ntrans{A})},\trans{C})
}{
\infer[?\A1]{
\seq (((\ntrans{\Delta_1}\{\trans{C}\},\ntrans{\Delta_2}),\ntrans{\Delta_3}),\nynot{a1}\ntrans{A})
}{
\seq ((\ntrans{\Delta_1}\{\trans{C}\},(\ntrans{\Delta_2},\ntrans{\Delta_3})),\nynot{a1}\ntrans{A})
}}
\qquad
\rwto
\qquad
\deduce{\desi{(((\Delta_1\{*\},\Delta_2),\Delta_3),\nbang{a1}A)}^r\seq C}{
\infer[\mathsf{A1R}]{\desi{(\Delta_1\{*\},(\Delta_2,(\Delta_3,\nbang{a1}A)))}^r\seq C \equiv}{
\desi{(\Delta_1\{*\},((\Delta_2,\Delta_3),\nbang{a1}A))}^r\seq C
}}
\]

There are many important things to note here.  Firstly, note the use of the substitution lemma in the application of $(\A1R)$.  Further, we have

\begin{align*}
(((\Delta_1\{*\},\Delta_2),\Delta_3),\nbang{a1}A) &\sim
((\Delta_1\{*\},\Delta_2),(\Delta_3,\nbang{a1}A))\\
&\sim (\Delta_1\{*\},(\Delta_2,(\Delta_3,\nbang{a1}A)))
\end{align*}

\noindent which implies that
\[
\desi{(((\Delta_1\{*\},\Delta_2),\Delta_3),\nbang{a1}A)}^r\equiv
\desi{(\Delta_1\{*\},(\Delta_2,(\Delta_3,\nbang{a1}A)))}^r
\]

In the case it appears in the middle, we have the translation,

\[
\infer=[\sim]{
\seq (\desi{(((\ntrans{\Delta_1},\ntrans{\Delta_2}\{*\}),\ntrans{\Delta_3}),\nynot{a1}\ntrans{A})},\trans{C})
}{
\infer[?\A1]{
\seq (((\ntrans{\Delta_1},\ntrans{\Delta_2}\{\trans{C}\}),\ntrans{\Delta_3}),\nynot{a1}\ntrans{A})
}{
\seq ((\ntrans{\Delta_1},(\ntrans{\Delta_2}\{\trans{C}\},\ntrans{\Delta_3})),\nynot{a1}\ntrans{A})
}}
\qquad
\rwto
\qquad
\deduce{\desi{(((\Delta_1,\Delta_2\{*\}),\Delta_3),\nbang{a1}A)}^r\seq C}{
\infer[\mathsf{A1M}]{\desi{(\Delta_2\{*\},((\Delta_3,\nbang{a1}A),\Delta_1))}^r\seq C \equiv}{
\deduce{\desi{(\Delta_2\{*\},(\Delta_3,(\nbang{a1}A,\Delta_1)))}^r\seq C}{
\desi{((\Delta_1,(\Delta_2\{*\},\Delta_3)),\nbang{a1}A)}^r\seq C\equiv
}}}
\]

And finally, if it appears on the right we have

\[
\infer=[\sim]{
\seq (\desi{(((\ntrans{\Delta_1},\ntrans{\Delta_2}),\ntrans{\Delta_3}\{*\}),\nynot{a1}\ntrans{A})},\trans{C})
}{
\infer[?\A1]{
\seq (((\ntrans{\Delta_1},\ntrans{\Delta_2}),\ntrans{\Delta_3}\{\trans{C}\}),\nynot{a1}\ntrans{A})
}{
\seq ((\ntrans{\Delta_1},(\ntrans{\Delta_2},\ntrans{\Delta_3}\{\trans{C}\})),\nynot{a1}\ntrans{A})
}}
\qquad
\rwto
\qquad
\deduce{\desi{(((\Delta_1,\Delta_2),\Delta_3\{*\}),\nbang{a1}A)}^r\seq C}{
\infer[\mathsf{A1L}]{\desi{(\Delta_3\{*\},(\nbang{a1}A,(\Delta_1,\Delta_2)))}^r\seq C \equiv}{
\deduce{\desi{(\Delta_3\{*\},((\nbang{a1}A,\Delta_1),\Delta_2))}^r\seq C}{
\desi{((\Delta_1,(\Delta_2,\Delta_3\{*\})),\nbang{a1}A)}^r\seq C\equiv
}}}
\]

The cases for $(?\A2)$ are similar.
\end{proof}

\end{appendix}

\end{document}